\def\confver{0}
\newtheorem{theorem}{Theorem}[section]
\newtheorem{corollary}{Corollary}[theorem]
\newtheorem{definition}[theorem]{Definition}
\newtheorem{claim}[theorem]{Claim}
\newcommand{\Vor}{{\rm Vor}}
\newcommand{\Cen}{{\rm Cen}}
\newcommand{\eps}{\epsilon}
\newcommand{\calC}{{\cal C}}
\newcommand{\poly}{{\rm poly}}
\newcommand{\polylog}{{\rm polylog}}
\newcommand{\dist}{{\rm d}}
\newif\ifdraft
\newcommand{\tO}{\tilde{O}}
\newcommand{\eqdef}{\stackrel{\rm def}{=}}
\newcommand{\allnbr}{\mathtt{ALL\_NBR}}
\newif\ifnames
\title{Improved LCAs for constructing spanners}
\author{Rubi Arviv\thanks{Efi Arazi School of Computer Science, Reichman University. Email: {\tt rubi.arv@gmail.com}.}
\and Lily Chung \thanks{MIT. Email: {\tt lkdc@mit.edu}. Supported by an Akamai Presidential Fellowship.}
\and 
Reut Levi\thanks{Efi Arazi School of Computer Science, Reichman University. Email: {\tt reut.levi1@runi.ac.il}.}
\and Edward Pyne \thanks{MIT. Email: {\tt epyne@mit.edu}. Supported by an Akamai Presidential Fellowship.}
}
\title{Improved Local Computation Algorithms for Constructing Spanners}
\begin{document}
\begin{titlepage}
\maketitle
\begin{abstract}
    A spanner of a graph is a subgraph that preserves lengths of shortest paths up to a multiplicative distortion. For every $k$, a spanner with size $O(n^{1+1/k})$ and stretch $(2k+1)$ can be constructed by a simple centralized greedy algorithm, and this is tight assuming Erd\H{o}s girth conjecture.
 
    In this paper we study the problem of
    constructing spanners in a local manner, specifically in the Local Computation Model proposed by Rubinfeld et al. (ICS 2011).
    
    We provide a randomized Local Computation Agorithm (LCA) for constructing $(2r-1)$-spanners with $\tO(n^{1+1/r})$ edges
	and probe complexity of $\tO(n^{1-1/r})$ for $r \in \{2,3\}$, where $n$ denotes the number of vertices in the input graph. Up to polylogarithmic factors, in both cases, the stretch factor is optimal (for the respective number of edges).
	In addition, our probe complexity for $r=2$, i.e., for constructing a $3$-spanner, is optimal up to polylogarithmic factors.
	Our result improves over the probe complexity of Parter et al. (ITCS 2019) that is $\tO(n^{1-1/2r})$ for $r \in \{2,3\}$. Both our algorithms and the algorithms of Parter et al. use a combination of neighbor-probes and pair-probes in the above-mentioned LCAs.
	
	For general $k\geq 1$, we provide an LCA for constructing $O(k^2)$-spanners with $\tO(n^{1+1/k})$ edges using $O(n^{2/3}\Delta^2)$ neighbor-probes, improving over the $\tO(n^{2/3}\Delta^4)$ algorithm of Parter et al.
 
    By developing a new randomized LCA for graph decomposition, we further improve the probe complexity of the latter task to be $O(n^{2/3-(1.5-\alpha)/k}\Delta^2)$, for any constant $\alpha>0$. This latter LCA may be of independent interest.
\end{abstract}
\vfill
\thispagestyle{empty}
\end{titlepage}

\section{Introduction}
A spanner is a sparse structure that is a subgraph of the input graph and preserves, up to a predetermined multiplicative factor, the pairwise distance of vertices.
Formally, a $k$-spanner of a graph $G = (V,E)$ is a graph $G' = (V, E')$ such that $E' \subseteq E$, in which the distance between any pair of vertices in $G'$ is at most $k$ times longer than the corresponding distance in $G$. $k$ is referred to as the {\em stretch} of the spanner. 

Spanners have numerous applications in a wide variety of fields such as communication networks~\cite{Aw85, PU87, PU89}, biology~\cite{BD89} and robotics~\cite{Ch86, DFS90}.
Consequently, the problem of constructing spanners has been studied extensively in several models, such as the distributed model~\cite{BKS12, DG08, DGP07, DGPV08, DGPV09, EN17, P10}, streaming algorithms~\cite{AGM12, KW14} and dynamic algorithms~\cite{BK16, BFH19}. 

This problem was also considered in the realm of sublinear algorithms and in particular 
in the model of 
{\em Local computation algorithms} (LCAs) introduced by Rubinfeld et al.~\cite{RTVX11} (see also Alon et al.~\cite{ARVX12} and survey in~\cite{LM17}).
In this model the goal is to avoid computing the entire output and instead to compute parts of the output on demand. This model is suitable for the case that not only the input is massive but also the output. 
Moreover, LCAs support queries from different users while preserving consistency with a single valid solution (although there might be several valid solutions) across different queries. 
The notion of computing the output locally goes back to local algorithms, locally decodable codes and local reconstruction algorithms. LCAs can be viewed as a generalization of these frameworks.

Recently, several works~\cite{LRR20,LRR16,LL18,PRVY19} considered the problem of constructing spanners in the LCA model.
The formulation of the problem in this model is as defined next. 
\begin{definition}[\cite{ARVX12,LRR20}]
An \emph{LCA} $\mathcal{A}$ for graph {\em spanners} is a (randomized) algorithm
with the following properties. $\mathcal{A}$ has access to the adjacency list oracle $\mathcal{O}^G$ of the
input graph $G$, a tape of random bits, and local read-write computation memory. 
When given an input (query) edge $(u,v) \in E$, $\mathcal{A}$ accesses $\mathcal{O}^G$ by making probes, then returns {\rm YES} if $(u,v)$ is in the spanner $H$, or returns {\rm NO} otherwise.
This answer must only depend on the query $(u,v)$, the graph $G$, and the random bits. For a fixed tape of random bits, the answers given by $\mathcal{A}$ to all possible edge queries, must be consistent with one particular sparse spanner.
\end{definition}
For specific details regarding the types of probes supported in the LCA model, we refer the reader to Section~\ref{sec:preliminaries}.

\subsection{Our Results}\label{sec:our}
We provide LCAs that with high probability construct the following spanners.
\begin{enumerate}
    \item \label{item1}A $3$-spanner with $\tO(n^{1+1/2})$ edges. The probe and time complexity of the algorithm is $\tO(n^{1/2})$ which is optimal up to polylogarithmic factors (and constitutes the first optimal algorithm for general graphs). The size-stretch trade-off is optimal as well (up to polylogarithmic factors). This improves over the algorithm of Parter et al.~\cite{PRVY19} whose probe and time complexity is $\tO(n^{3/4})$.
    \item \label{item2} A $5$-spanner with $\tO(n^{1+1/3})$ edges (the size-stretch trade-off is optimal up to polylogarithmic factors). The probe and time complexity of the algorithm is $\tO(n^{2/3})$.
    This improves over the algorithm of Parter et al.~\cite{PRVY19} whose probe and time complexity is $\tO(n^{5/6})$.
    \item \label{item3} An $O(k^2)$-spanner with $\tO(n^{1+1/k})$ edges with high probability. The probe and time complexity of the algorithm is $O(n^{2/3}\Delta^2)$ where $\Delta$ denotes the maximum degree of the input graph.
    This improves over the algorithm of Parter et al.~\cite{PRVY19} whose probe and time complexity is $\tO(n^{2/3} \Delta^4)$. Our algorithm (and the algorithm of~\cite{PRVY19}) uses only neighbor probes for this task.
    \item \label{item4} By additionally taking advantage of adjacency probes we further improve the probe and time complexity of the latter algorithm to be $O(n^{2/3-(1.5-\alpha)/k}\Delta^2)$, for any constant $\alpha>0$. 
    This result utilizes a new, efficient local computation algorithm for decomposing a graph into subgraphs with improved maximum degree that may be of independent interest. 
\end{enumerate}

\subsection{Our algorithms and techniques}
We next describe our algorithms in high-level.
Our LCAs for constructing $3$-spanners and $5$-spanners share similarities with the LCAs in~\cite{PRVY19} (which are inspired by the algorithm of Baswana and Sen for constructing spanners~\cite{BS07}). The main novelty of our algorithms is in selecting several sets of centers, each designed to cluster different type of vertices. The basic idea is that for high-degree vertices we need to select less centers. 
Consequently, we can allow more edges per pair of vertex-cluster or cluster-cluster which decreases the probe complexity. 
To support this approach we also change the way each vertex finds its center. 
See more details in Subsections~\ref{sec:3spanH} and~\ref{sec:5spanH}.

Our algorithm for constructing $O(k^2)$-spanners consists of two parts. The first, which is described in high-level is Subsection~\ref{sec:kspanH}, closely follows the construction in~\cite{PRVY19}. The main novelty in this algorithm is in the way we partition the Voronoi cells, which are formed with respect to randomly selected centers, into clusters of smaller size. In addition, we make other adjustments in order to save an additional factor of $\Delta$ in the probe and time complexity. The second is a new LCA for decomposing a graph into subgraphs of smaller maximum degree. As the first algorithm depends quadratically on the degree, this allows for further savings. We elaborate on this algorithm, which may be of independent interest, in Subsection~\ref{sec:decomp_intro}.

\subsection{Algorithm for constructing \texorpdfstring{$3$}{3}-spanners}\label{sec:3spanH}
We begin with describing our algorithm for constructing $3$-spanners from a global point of view. The local implementation of this global algorithm is relatively straight-forward.

The high level idea is as follows.
We consider a partition of the vertices into {\em heavy} and {\em light} according to their degrees.
All the edges incident to light vertices are added to the spanner.
We now focus on the heavy vertices.
As a first step, a random subset of vertices is selected. We refer to these vertices as {\em centers}. 
 With high probability, every heavy vertex has a center in its neighborhood. 
 Assuming this event occurs, each heavy vertex joins a {\em cluster} of at least one of the centers in its neighborhood. A cluster is composed from a center and a subset of its neighbors. 
 On query $\{u, v\}$, where both $u$ and $v$ are heavy, we consider two cases.
 \begin{compactenum}
     \item $u$ and $v$ belong to the same cluster. In this case we add the edge $\{u, v\}$ to the spanner only in case $u$ is the center of $v$ or vice versa.
    \item Otherwise, $u$ and $v$ belong to different clusters. Assume without loss of generality that the degree of $v$ is not greater than the degree of $u$. 
    We divide the edges incident to $v$ into fixed size {\em buckets} and add the edge $\{u, v\}$ only if it has minimum rank amongst the edges that are incident to the cluster of $u$. 
    See Figure~\ref{fig:3spanner} for an illustration.
 \end{compactenum}
 
   In order to make the above high-level description concrete we need to set up some parameters and describe how the centers are selected and how each vertex finds its center.
   We begin by defining vertices with degrees larger than $\sqrt{n}$ as heavy. Thus by adding all the edges incident to light vertices we add at most $O(n^{3/2})$ edges. 
   
   The selection of the centers proceeds as follows. We define $t = \Theta(\log \sqrt{n})$ sets of centers, which are picked uniformly at random, $S_1, \ldots, S_t$ such that the size of $S_1$ is $\Theta(\sqrt{n})$ and the size of $S_{i+1}$ is roughly half of the size of $S_{i}$. 
  Thus, overall, the number of centers is $\tO(\sqrt{n})$.
  
  We next describe how each heavy vertex finds its center. We partition the heavy vertices into $t$ sets, $V_1, \ldots, V_t$ according to their degrees. The set $V_1$ contains all the vertices with degree in $[\sqrt{n}+1, 2\sqrt{n}]$ and in general for every $i\in [t]$, the set $V_i$ contains all the vertices with degree in $[2^{i-1}\sqrt{n}+1, 2^i\sqrt{n}]$.
  The centers for vertices in the set $V_i$ are taken from the set $S_i$. With high probability, for every $i\in [t]$, each vertex $v \in V_i$ has at least one vertex from $S_i$ in its neighborhood and at most $O(\log n)$. Thus, with high probability, each heavy vertex belongs to at least one cluster and at most $O(\log n)$ clusters.
  
  Given a heavy vertex $v \in V_i$, the centers of $v$ are found by going over all the vertices in $S_i$, $u$, and checking if $\{u, v\}$ is an edge in the graph. Since the total number of centers is $\tO(\sqrt{n})$, the probe and time complexity of finding the center of a given vertex is $\tO(\sqrt{n})$.
  
  It remains to set the size of the buckets. Let $\{u, v\} \in E$ be such that $v \in V_i$ and ${\rm deg}(u) \leq {\rm deg}(v)$. Since $v \in V_i$, it follows that ${\rm deg}(v) \leq 2^i\sqrt{n}$. Since $|S_i| \leq c \sqrt{n}\log n/ 2^i$ for some constant $c$, by setting the size of the buckets to be $\sqrt{n}$ we obtain that the total number of edges between heavy vertices that belong to different clusters is $\tO(n^{3/2})$, as desired.
  
  From the fact that the size of the buckets is $\sqrt{n}$ it follows that the total probe and time complexity of our algorithms is $\tO(\sqrt{n})$. 
  From the fact that the diameter of every cluster is $2$ we obtain that for every edge $\{u, v\}$ which we remove from the graph, there exists a path of length at most $3$ between $u$ and $v$. Hence, the stretch factor of our spanner is $3$, as desired.
  See Figure~\ref{fig:3spanner} for illustration of this part.
  
\ifnum\confver=0
\begin{figure}[H]
\centerline{\scalebox{0.85}{

\tikzset{every picture/.style={line width=0.75pt}} 

\begin{tikzpicture}[x=0.75pt,y=0.75pt,yscale=-1,xscale=1]

\draw  [fill={rgb, 255:red, 74; green, 144; blue, 226 }  ,fill opacity=1 ] (209.82,139) -- (242.05,139) -- (242.05,227) -- (209.82,227) -- cycle ;
\draw [color={rgb, 255:red, 208; green, 2; blue, 27 }  ,draw opacity=1 ][line width=6]    (20.38,165.53) -- (238.11,159.24) ;
\draw [color={rgb, 255:red, 208; green, 2; blue, 27 }  ,draw opacity=1 ][line width=6]    (222.32,218.3) -- (322.96,218.3) ;
\draw [color={rgb, 255:red, 208; green, 2; blue, 27 }  ,draw opacity=1 ][fill={rgb, 255:red, 0; green, 0; blue, 0 }  ,fill opacity=1 ][line width=6]    (238.11,159.24) -- (318.36,218.3) ;
\draw [fill={rgb, 255:red, 0; green, 0; blue, 0 }  ,fill opacity=1 ][line width=0.75]    (238.11,159.24) -- (318.36,218.3) ;
\draw    (238.11,277.36) -- (318.36,218.3) ;
\draw    (318.36,218.3) -- (361,308) ;
\draw  [fill={rgb, 255:red, 0; green, 0; blue, 0 }  ,fill opacity=1 ] (15.78,165.53) .. controls (15.78,162.83) and (17.84,160.64) .. (20.38,160.64) .. controls (22.92,160.64) and (24.98,162.83) .. (24.98,165.53) .. controls (24.98,168.23) and (22.92,170.42) .. (20.38,170.42) .. controls (17.84,170.42) and (15.78,168.23) .. (15.78,165.53) -- cycle ;
\draw [line width=0.75]    (217.72,218.3) -- (318.36,218.3) ;
\draw   (217.72,218.3) .. controls (217.72,162.91) and (262.77,118) .. (318.36,118) .. controls (373.94,118) and (419,162.91) .. (419,218.3) .. controls (419,273.7) and (373.94,318.6) .. (318.36,318.6) .. controls (262.77,318.6) and (217.72,273.7) .. (217.72,218.3) -- cycle ;
\draw  [fill={rgb, 255:red, 0; green, 0; blue, 0 }  ,fill opacity=1 ] (233.5,277.36) .. controls (233.5,274.66) and (235.56,272.47) .. (238.11,272.47) .. controls (240.65,272.47) and (242.71,274.66) .. (242.71,277.36) .. controls (242.71,280.07) and (240.65,282.26) .. (238.11,282.26) .. controls (235.56,282.26) and (233.5,280.07) .. (233.5,277.36) -- cycle ;
\draw  [fill={rgb, 255:red, 0; green, 0; blue, 0 }  ,fill opacity=1 ] (356.4,308) .. controls (356.4,305.3) and (358.46,303.11) .. (361,303.11) .. controls (363.54,303.11) and (365.6,305.3) .. (365.6,308) .. controls (365.6,310.7) and (363.54,312.89) .. (361,312.89) .. controls (358.46,312.89) and (356.4,310.7) .. (356.4,308) -- cycle ;
\draw [color={rgb, 255:red, 0; green, 0; blue, 0 }  ,draw opacity=1 ][line width=0.75]    (20.38,165.53) -- (217.72,218.3) ;
\draw  [fill={rgb, 255:red, 74; green, 144; blue, 226 }  ,fill opacity=1 ] (210.82,290.99) -- (243.05,290.99) -- (243.05,373) -- (210.82,373) -- cycle ;
\draw  [fill={rgb, 255:red, 74; green, 144; blue, 226 }  ,fill opacity=1 ] (209.82,35.99) -- (242.05,35.99) -- (242.05,118) -- (209.82,118) -- cycle ;
\draw    (339,119) -- (318.36,218.3) ;
\draw  [fill={rgb, 255:red, 0; green, 0; blue, 0 }  ,fill opacity=1 ] (313.75,218.3) .. controls (313.75,215.6) and (315.81,213.41) .. (318.36,213.41) .. controls (320.9,213.41) and (322.96,215.6) .. (322.96,218.3) .. controls (322.96,221) and (320.9,223.19) .. (318.36,223.19) .. controls (315.81,223.19) and (313.75,221) .. (313.75,218.3) -- cycle ;
\draw  [fill={rgb, 255:red, 0; green, 0; blue, 0 }  ,fill opacity=1 ] (233.5,159.24) .. controls (233.5,156.54) and (235.56,154.35) .. (238.11,154.35) .. controls (240.65,154.35) and (242.71,156.54) .. (242.71,159.24) .. controls (242.71,161.94) and (240.65,164.13) .. (238.11,164.13) .. controls (235.56,164.13) and (233.5,161.94) .. (233.5,159.24) -- cycle ;
\draw  [fill={rgb, 255:red, 0; green, 0; blue, 0 }  ,fill opacity=1 ] (213.11,218.3) .. controls (213.11,215.6) and (215.17,213.41) .. (217.72,213.41) .. controls (220.26,213.41) and (222.32,215.6) .. (222.32,218.3) .. controls (222.32,221) and (220.26,223.19) .. (217.72,223.19) .. controls (215.17,223.19) and (213.11,221) .. (213.11,218.3) -- cycle ;
\draw    (318.36,218.3) -- (419,218.3) ;
\draw [color={rgb, 255:red, 0; green, 0; blue, 0 }  ,draw opacity=1 ][line width=0.75]    (20.38,165.53) -- (238.11,159.24) ;
\draw  [fill={rgb, 255:red, 0; green, 0; blue, 0 }  ,fill opacity=1 ] (334.4,119) .. controls (334.4,116.3) and (336.46,114.11) .. (339,114.11) .. controls (341.54,114.11) and (343.6,116.3) .. (343.6,119) .. controls (343.6,121.7) and (341.54,123.89) .. (339,123.89) .. controls (336.46,123.89) and (334.4,121.7) .. (334.4,119) -- cycle ;
\draw [color={rgb, 255:red, 0; green, 0; blue, 0 }  ,draw opacity=1 ][line width=0.75]    (20.38,165.53) -- (209,55) ;
\draw [color={rgb, 255:red, 0; green, 0; blue, 0 }  ,draw opacity=1 ][line width=0.75]    (20.38,165.53) -- (210,81) ;
\draw [color={rgb, 255:red, 0; green, 0; blue, 0 }  ,draw opacity=1 ][line width=0.75]    (20.38,165.53) -- (210,300) ;
\draw [color={rgb, 255:red, 0; green, 0; blue, 0 }  ,draw opacity=1 ][line width=0.75]    (20.38,165.53) -- (211,360) ;
\draw [color={rgb, 255:red, 0; green, 0; blue, 0 }  ,draw opacity=1 ][line width=0.75]    (20.38,165.53) -- (209,109) ;
\draw [color={rgb, 255:red, 0; green, 0; blue, 0 }  ,draw opacity=1 ][line width=0.75]    (20.38,165.53) -- (210,330) ;
\draw  [fill={rgb, 255:red, 0; green, 0; blue, 0 }  ,fill opacity=1 ] (414.4,218.3) .. controls (414.4,215.6) and (416.46,213.41) .. (419,213.41) .. controls (421.54,213.41) and (423.6,215.6) .. (423.6,218.3) .. controls (423.6,221) and (421.54,223.19) .. (419,223.19) .. controls (416.46,223.19) and (414.4,221) .. (414.4,218.3) -- cycle ;

\draw (189.58,291.99) node [anchor=north west][inner sep=0.75pt]    {$b_{\ell }$};
\draw (185.58,128.99) node [anchor=north west][inner sep=0.75pt]    {$b_{2}$};
\draw (256.5,152.24) node  [font=\large,color={rgb, 255:red, 0; green, 0; blue, 0 }  ,opacity=1 ]  {$u_{j}$};
\draw (157.81,189.73) node  [font=\small,color={rgb, 255:red, 208; green, 2; blue, 27 }  ,opacity=1 ,rotate=-14.7]  {$\{\mathbf{v} ,u\} \ \in E'?$};
\draw (185.58,34.99) node [anchor=north west][inner sep=0.75pt]    {$b_{1}$};
\draw (226.85,273.22) node  [font=\LARGE,rotate=-88.96] [align=left] {...};
\draw (206.5,233.24) node  [font=\large,color={rgb, 255:red, 0; green, 0; blue, 0 }  ,opacity=1 ]  {$u$};
\draw (342.73,205.53) node  [font=\normalsize]  {$s_{u}$};
\draw (186.01,5) node [anchor=north west][inner sep=0.75pt]  [font=\footnotesize]  {$\left( size\leq \sqrt{n}\right)$};
\draw (253,330) node [anchor=north west][inner sep=0.75pt]  [font=\footnotesize]  {$\ell \ =\frac{|N( v) |}{\sqrt{n}}$};
\draw (17.81,147.73) node  [font=\large,color={rgb, 255:red, 0; green, 0; blue, 0 }  ,opacity=1 ]  {$v$};

\end{tikzpicture}}}
\caption{Illustration for the local construction of the \emph{3-spanner}. On query $\{v, u\}$, namely when querying whether the edge $\{v, u\}$ belongs to the spanner, the algorithm returns NO since the vertex $u_j$ is in the same bucket as $u$ and $\{v,u_j\}$ is the edge that has minimum rank amongst the edges in the bucket that are incident to the cluster of $u$.}
\label{fig:3spanner}
\end{figure}
\fi

\subsection{Algorithm for constructing \texorpdfstring{$5$}{5}-spanners}\label{sec:5spanH}
We extend the ideas from the previous section to obtain our algorithm for constructing $5$-spanners as follows.
We partition the vertices in the graph into three sets: {\em heavy}, {\em medium}, and {\em light}.
The set of light vertices is defined to be the set of all vertices with a degree at most $n^{1/3}$ and the set of heavy vertices is defined to be the set of all vertices of degree at least $n^{2/3}$.
The set of the medium vertices is defined to be all vertices that are not light nor heavy.

As before, we add to the spanner all the edges incident to light vertices and cluster all the heavy vertices into a cluster of diameter $2$. 
The difference is that now when we partition the heavy vertices into sets according to their degrees the first set consists of all vertices with a degree in $[n^{2/3}+1, 2n^{2/3}]$.

We partition the set of medium vertices into two sets according to the following random process.
Each medium vertex $v$ samples uniformly at random $\Theta(\log n)$ of its neighbors. 
If one of the vertices in the sample is heavy then $v$ joins the cluster of the heavy vertex in the sample that has minimum rank. 
Otherwise we say that $v$ is {\em bad}.
This forms clusters of diameter $4$.

In a similar manner to the process described above we define another a new collection of sets of centers for the bad vertices such that the total number of such centers is $\tO(n^{2/3})$ and each bad vertex belongs to at least one cluster and at most $O(\log n)$ clusters. The new centers are selected (randomly) only from the set of vertices which are not heavy.  
We call the corresponding clusters {\em light-clusters} since they contain at most $n^{2/3}$ vertices and have diameter $2$.
Since the total number of light-clusters is $\tO(n^{2/3})$ we can afford to take an edge between every pair of adjacent light-clusters.
Moreover, we partition each light-cluster into buckets of size $n^{1/3}$ and take an edge between every pair of adjacent buckets. Since each bad vertex belongs to $O(\log n)$ light-clusters, the total number of pairs of buckets is $\tO(n^{4/3})$.
The time and probe complexity of finding all the edges incident to two buckets is $\tO(n^{2/3})$, as desired.

To analyse the stretch factor we partition the edges we remove into three types. The first type of edges are edges between vertices in the same cluster. The second type of edges are edges between a vertex $v$ and a cluster $C$ which is not light, in which case there at least one edge in the spanner which is incident to both $v$ and $C$. The third type of edges are edges which are incident to a pair of light-clusters, in which case there exists at least one edge in the spanner which is incident to each pair of such clusters.
Thus, overall the stretch factor is $5$, as claimed.
See Figures~\ref{fig:5-light} and~\ref{fig:5spanner_rep} for illustrations of this part. 

\subsection{Algorithm for constructing \texorpdfstring{$O(k^2)$}{O(k\^2)}-spanners}\label{sec:kspanH}

The high-level idea of the algorithm for constructing \(O(k^2)\)-spanners, which we describe from a global point of view, follows that of \cite{PRVY19}.
The vertices of the graph are first partitioned into $\tO(n^{2/3})$ Voronoi cells which are formed with respect to a randomly selected set of $\tO(n^{2/3})$ centers.
We can assume that each Voronoi cell has diameter $O(k)$ by using a separate algorithm to handle {\em remote} vertices which may not be close to a center.
Each Voronoi cell is then partitioned into clusters of size $L = \tO(n^{1/3})$. In addition each Voronoi cell is marked with probability $1/n^{1/3}$ which respectively also marks all the clusters of the cell.
Each non-marked cluster connects to all the adjacent marked clusters using a single edge. This forms {\em clusters-of-clusters} around marked clusters.
Instead of connecting every pair of adjacent clusters $A$ and $B$, which we can not afford, our goal is to connect $A$ with the cluster-of-clusters of some marked cluster $C$ adjacent to $B$.
Since we can not afford to reconstruct the cluster-of-clusters of $C$,
we instead find the identity of all the Voronoi cells which are adjacent to $C$ and try to connect $A$ with at least one of these cells.
We show that this is indeed the case although $A$ may not be connected directly to any one of these cells.
By applying an inductive argument we show that the number of hops between $A$ and $B$ is $O(k)$, where traversing from one Voronoi cell to another is considered one hop.
Since the diameter of each Voronoi cell is $O(k)$ we obtain an overall stretch factor of $O(k^2)$.

We improve on the LCA of \cite{PRVY19} in two main ways.
The first main improvement is a new method for partitioning the Voronoi cells into clusters of size $L$, allowing the cluster containing a vertex to be reconstructed using \(O(\Delta^2L^2)\) probes instead of \(O(\Delta^3L^2)\).
The second improvement relates to the problem of connecting a cluster $A$ to the cluster-of-clusters of some marked cluster $C$ adjacent to $B$.
In particular there is an issue of which marked cluster $C$ should be chosen, since it is too expensive to reconstruct every marked cluster adjacent to $B$.
The LCA of \cite{PRVY19} processes a single cluster of each adjacent marked Voronoi cell to $B$,
of which there may be as many as \(\tO(\Delta)\).
We instead devise a rule by which $B$ is {\em engaged} with a single marked cluster adjacent to it,
and show that in fact it suffices to only consider this one cluster.
Combining these improvements reduces the total number of probes from \(\tO(n^{2/3}\Delta^4)\) to \(O(n^{2/3}\Delta^2)\).

\subsection{Algorithm for graph decomposition}\label{sec:decomp_intro}
To further reduce the runtime of Theorem~\ref{thm:k2-main}, we develop a new local computation algorithm to decompose graphs into subgraphs with smaller degree. Observe that for a graph $G$, for subgraphs $G_1,\ldots,G_t$, if we have $k$-spanners $H_i\subset G_i$ for every $i$, the union $\bigcup_{i\in [t]}H_i$ is a $k$-spanner for $G$. As the runtime of Theorem~\ref{thm:k2-main} depends on the maximum degree $\Delta$, we develop an efficient LCA to break $G$ into $t$ graphs, each with maximum degree $O(\max\{\Delta/t,\log n\})$, where $t$ is a parameter to be chosen. Given $v$ and an index $i\in [t]$, the LCA returns in time $O(\Delta/\sqrt{t})$ all neighbors of $v$ in $G_i$ (i.e. it supports $\allnbr$ queries to each subgraph). We believe this algorithm may have other applications.

To apply this algorithm in the spanner framework, we compose the LCA for $O(k^2)$ spanners with the LCA for graph decomposition. This is more subtle than generic sequential composition of algorithms, as we must ensure the per-query overhead is mild. We do this by observing the $O(k^2)$-spanner algorithm only ever makes all neighbor queries, and so the decomposition LCA spends $O(\Delta/\sqrt{t})$ work per query the spanner LCA makes to the graph. In particular, as the spanner LCA makes $O(n^{2/3}\Delta)$ all neighbor queries, our new runtime is $O(n^{2/3}\Delta^2/t^{3/2})$ given our choice of $t$. As decomposing $G$ into $t$ subgraphs increases the size of the output spanner by a factor of $t$, we ultimately balance parameters and obtain a probe and time complexity of $O(n^{2/3-(1.5-\alpha)/k}\Delta^2)$ for any $\alpha > 0$.

\subsubsection{Decomposing the Graph}
To build this graph, consider assigning each edge of $G=(V,E)$ two colors $i,j\in [R]$, one from each endpoint. In particular, $v$ assigns its first $\Delta/R$ edges to receive color $1$, the next $\Delta/R$ to receive color $2$, etc. Then if an edge $(u,v)$ has received colors $i,j$ from both endpoints, we can let the overall edge color be $(i,j)$ where we assume w.l.o.g that $u<v$. Observe that if each color corresponds to a subgraph, then this decomposition breaks $G$ into $R^2$ subgraphs. Moreover, given $i,j$ and a vertex $v$, we can quickly enumerate blocks $i$ and $j$ from $v$ and determine which edges lie in the specified subgraph. However, as these blocks may be poorly aligned, this as described results in a maximum subgraph degree of $\Delta/R$ rather than $\Delta/R^2$. Instead, we have each vertex choose a random shift, and assign labels to its blocks according to this shift. Then via standard concentration bounds the maximum degree of every subgraph is as desired. We remark that as we must enumerate every element of bucket $i$ and $j$ from $v$ to find edges with label $(i,j)$, the worst-case time for an individual neighbor query can be up to $\Omega(\Delta/R)$. However, we only need to do this once to answer an all-neighbors query, so as long as all the neighbors are desired we can efficiently amortize this cost.

\subsection{The number of random bits}

All our algorithms are randomized and hence use random bits. For results \ref{item1}-\ref{item2} 
we use randomness in the selection of centers and representatives. In result \ref{item3} we use randomness in the selection of centers, marked clusters, and random ranking of edges. 

When the centers and representatives are selected independently, the arguments for proving the guarantees on the sparsity of the spanner follow from standard concentration bounds.
As shown by Parter at al.~\cite{PRVY19}, by using a less standard analysis one is able to prove that the same guarantees on the sparsity hold even when the random bits are only $\Theta(\log n)$-wise independent (which requires only $O(\log^2 n)$ truly random bits).
Furthermore, by using an intricate analysis, they showed that the guarantees on the stretch factor continue to hold as well.  
In this writing, we do not repeat the analysis in~\cite{PRVY19} since it lends itself quite easily to our setting. 

For result~\ref{item4}, similar techniques to those of~\cite{PRVY19} allow the result to be implemented using $\polylog(n)$-wise independence as well~\footnote{More specifically, by using the concentration bound from Fact 5.3 in~\cite{PRVY19} on the sum of $d$-wise independent random variables.}.

\subsection{Related work}
\label{subsec:related}
As mentioned above, the work which is the most closely related to our work is by Parter et el. \cite{PRVY19}.
In addition to the upper bounds mentioned in Section~\ref{sec:our} they also observe that it is possible to obtain an LCA for constructing $5$-spanners with $\tO(n^{1+1/k})$ edges and probe complexity $\tO(n^{1-1/(2k)})$ for the special case in which the minimum degree is known to be at least $n^{1/2-1/(2k)}$ (this builds on the fact that by picking $\tO(n^{(1+1/k)/2})$ centers, w.h.p. each vertex has a center in its neighborhood).
In addition to upper bounds, they also provide a lower bound of $\Omega(\min\{\sqrt{n},\frac{n^2}{m}\})$ probes for the simpler task of constructing a spanning graph with $o(m)$ edges, where $m$ denotes the number of edges in the input graph.

Our work also builds on the upper bound in~\cite{LL18}, designed originally for bounded degree graphs, which provide a spanner with $(1+\epsilon) n$ edges on expectation, where $\eps$ is a parameter, stretch factor $O(\log^2n \cdot \poly(\Delta/\epsilon))$ and probe complexity of $O(\poly(\Delta/\epsilon) \cdot n^{2/3})$.
The work in~\cite{LL18} is a follow-up of~\cite{LRR20,LRR16} which initiated the study of LCAs for constructing ultra-sparse (namely, with $(1+\epsilon) n$ edges) spanning subgraphs.

\section{Preliminaries}
\label{sec:preliminaries}
The input graph $G=(V,E)$ is a simple undirected graph with $|V|=n$ vertices and a bound on the degree $\Delta$.
Both parameters $n$ and $\Delta$ are known to the algorithm.
Each vertex $v\in V$ is represented as a unique ID from $[n]$.

A local algorithm has access to the \emph{adjacency list oracle} $\mathcal{O}^G$ which provides answers to the following probes (in a single step):
\begin{itemize}
\item \textbf{Degree probe:} Given $v\in V$, returns the degree of $v$, denoted by $\deg(v)$.  
\item \textbf{Neighbour probe:} Given $v \in V$  and an index $i$, returns the $i^\textrm{th}$ neighbor of $v$ if $i \leq \deg(v)$. Otherwise, $\bot$ is returned.
Additionally, for $v\in V$, we define the all-neighbors query, denoted by $\allnbr(v)$, which returns all the neighbors of $v$. Clearly, this query can be implemented by $\deg(v) + 1$ neighbor probes.  
\item \textbf{Adjacency probe:} Given an ordered pair $\langle u,v \rangle$ where $u\in V$ and $v \in V$, if $v$ is a neighbor of $u$ then $i$ is returned where $v$ is the $i^\textrm{th}$ neighbor of $u$. Otherwise, $\bot$ is returned.
\end{itemize}

We denote the distance between two vertices $u$ and $v$ in $G$ by $\dist(u,v)$ and the set of neighbours of $v$ in $G$ by $N_G(v)$.
We denote by $N_G(v)[i]$ the $i$-th neighbour of $v$ in $G$.
For vertex $v \in V$ and an integer $k$,
let $\Gamma_k(v,G)$ denote the set of vertices at distance at most
$k$ from $v$.
When the graph $G$ is clear from the context, we shall use the shorthand
$d(u,v)$, $N(v)$ and $\Gamma_k(v)$  for $d_G(u,v)$, $N_G(v)$ and $\Gamma_k(v,G)$, respectively.
\sloppy
We define a ranking $r$  of the edges as follows:
$r(u,v) < r(u',v')$ if and only if $\min\{u, v\} < \min\{u', v'\}$
or $\min\{u, v\} = \min\{u',v'\}$
and $\max\{u,v\} < \max\{u',v'\}$.

\medskip
We shall use the following definitions in our algorithms for constructing $3$-spanners and $5$-spanners. 

\begin{definition}\label{def:class}
We say that a vertex $v\in V$ is in {\em class $i \in \mathbb{N}$ w.r.t. $\Delta$} if $\deg(v) \in [2^{i-1}\Delta+1, 2^i\Delta]$.
 \end{definition}

\begin{definition}\label{def:bucket}
We say that an index $i \in \mathbb{N}$ is in {\em bucket $j\in \mathbb{N}$ w.r.t. $\Delta$} if $i \in [(j-1)\cdot\Delta+1, j\cdot \Delta]$.
 \end{definition}

\subsection{Probes in the LCA model}\label{subsec:model}
Since the introduction of the model in~\cite{RTVX11}, there have been several formulations concerning, mainly, the measure of performance, the way the input is accessed, and whether preprocessing is allowed.   
In particular, when the input is a graph, there is the question of whether the LCA can probe the graph anywhere (i.e. ask for the neighbors of an arbitrary vertex).
In contrast to message-passing models such as {\rm CONGEST} and distributed {\rm LOCAL} algorithms, in LCAs the standard assumption is that indeed the LCA can access the graph anywhere and more specifically that each vertex in the input graph is represented as a unique ID from $[n] = \{1,\ldots,n\}$.
To support this claim, we refer the reader to the ultra-formal definition  in~\cite{Gol17} (Definition 12.11) as well as~\cite{lec, EMR18}.

We note that the utility of making far-probes~\footnote{Namely, probing vertices for which we do not yet know a path from the query vertex.} was studied in~\cite{GHLMS16}, in which the authors showed that for a large family of problems, this extra power is not so useful. 
Indeed, this extra power is not always used by LCAs. For example, in the recent result of Ghaffari~\cite{Gha22}, which provides an LCA for the problem of Maximal Independent Set, the assumption is that the IDs are taken from $[n^{10}]$. 
Nonetheless, we stress that this extra power is an important feature of the LCA model, which, in particular, distinguishes it from message-passing models (see more on the difference between LCAs and distributed {\rm LOCAL} algorithms in Section 4.1 in~\cite{LM17}) and comes into play in problems which have a more global nature.
For example, this extra power is utilized in Prop. 12.13 in~\cite{Gol17} for graph coloring and in~\cite{LRY17} for approximate Maximum-Matching. The latter LCA is used in Behnezhad et al.~\cite{BRR22} to obtain a state-of-the-art sublinear algorithm for the extensively studied problem of approximate Maximum-Matching.

\section{LCA for constructing \texorpdfstring{$3$}{3}-spanners}\label{sec:3spanner}
In this section, we prove the following theorem. Due to space limitations, we defer the claims regarding probe and time complexities as well as the stretch factor to the appendix. 

\begin{theorem}\label{thm:3-main}
There exists an LCA that given access to an $n$-vertex simple undirected graph $G$, constructs a $3$-spanner of $G$ with $\tO(n^{1+1/2})$ edges whose probe complexity and time complexity are $\tO(n^{1/2})$.
\end{theorem}

Our algorithm is listed as Algorithm~\ref{alg:spanner3}. As mentioned-above our algorithm proceeds by forming clusters around centers and connecting the different clusters.
To make the description of our algorithm complete we begin with describing the selection of centers.
 
 We define $t \eqdef \log \sqrt{n}$ sets of centers $S_1, \ldots, S_t$. 
   For every $i\in [t]$, we pick u.a.r. $x_i$ vertices to be in $S_i$ where
$x_1 = \sqrt{n}\log n$ and $x_{i+1} = x_i/2$ for every $i \in [t-1]$. The rest of the details of the algorithm appear in Algorithm~\ref{alg:spanner3}. We next prove the correctness of the algorithm.

Recall that we refer to a vertex whose degree is greater than $\sqrt{n}$ as heavy. 
The next claim states that with high probability every heavy vertex has at least one center and $O(\log n)$ centers in its neighborhood.  
 
\begin{claim}\label{clm:numberc}
With high probability, for every $i\in [t]$ and every vertex $v\in V$ that is in class $i$ w.r.t. $\sqrt{n}$ it holds that $N(v) \cap S_i \neq \emptyset$ and that $|N(v) \cap S_i| = O(\log n)$.
\end{claim}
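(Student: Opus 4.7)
The plan is a standard Chernoff-plus-union-bound argument. Fix $i \in [t]$ and a vertex $v \in V$ of class $i$ w.r.t. $\sqrt{n}$, and let $X \eqdef |N(v) \cap S_i|$. Since $S_i$ is a uniform random subset of $V$ of size $x_i = \sqrt{n}\log n / 2^{i-1}$, we have $\mathbb{E}[X] = |N(v)| \cdot x_i / n$, and plugging in the bounds $|N(v)| \in (2^{i-1}\sqrt{n}, 2^i\sqrt{n}]$ coming from the definition of class $i$ yields $\mathbb{E}[X] \in (\log n, 2\log n]$. The key structural observation here is that the scaling $|S_i| \propto 1/2^{i-1}$ was chosen precisely so that, regardless of the class index $i$, the expected number of centers in $N(v)$ is $\Theta(\log n)$.

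Next, I would apply multiplicative Chernoff-type concentration to $X$. Because $S_i$ is sampled without replacement, $X$ is hypergeometric rather than a sum of independent Bernoullis, but the standard multiplicative Chernoff bounds still apply (via negative association, or equivalently the Hoeffding--Serfling inequality). The lower tail gives $\Pr[X = 0] \leq \Pr[X \leq \mathbb{E}[X]/2] \leq \exp(-\Omega(\mathbb{E}[X]))$, and the upper tail gives $\Pr[X \geq C\log n] \leq \exp(-\Omega(\log n))$ for a suitable constant $C$. By inflating the hidden constant inside $x_1 = \sqrt{n}\log n$ if needed, both tails can be driven below $n^{-c}$ for any constant $c$ of our choice.

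To conclude, I would union-bound over all $n$ vertices of $V$ and all $t = O(\log n)$ indices $i$, obtaining a total failure probability of $n\log n \cdot n^{-c} = 1/\poly(n)$ once $c \geq 2$, which is the desired high-probability statement. The only real obstacle is bookkeeping: one must choose the constants in $x_i$ and in the Chernoff exponent consistently so that the concentration bound beats the $O(n\log n)$ union-bound factor. The switch from independent sampling to sampling without replacement is a minor technical point that is absorbed by the standard hypergeometric version of the Chernoff bound.
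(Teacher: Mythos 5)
Your argument is correct and is exactly the standard Chernoff-plus-union-bound proof that the paper leaves implicit (the paper states Claim~\ref{clm:numberc} without proof, evidently treating it as routine). Your expectation computation $\mathbb{E}[X]\in(\log n, 2\log n]$ correctly captures why the geometric scaling of $|S_i|$ against the dyadic degree classes was chosen, and your note that the hypergeometric/without-replacement issue is absorbed by negative association is the right technical remark.
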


\begin{algorithm}
\caption{LCA for constructing $3$-spanners
}\label{alg:spanner3}
\textbf{Input:} Access to an undirected graph $G=(V,E)$ and a query $\{u, v\}\in E$ where we assume w.l.o.g. that $\deg(u) \geq \deg(v)$.\\
\textbf{Output:} Returns whether $\{u, v\}$ belongs to the spanner or not.
\begin{enumerate}
\item\label{alg:3:step1} If $\deg(v) \leq n^{1/2}$ then return YES (recall that $\deg(u) \geq \deg(v)$).
\item\label{alg:3:step2} Otherwise, let $c$ denote the class of $u$ w.r.t. $\sqrt{n}$ (see Definition~\ref{def:class}). 
\item\label{alg:3:step3} If $v \in S_c$ then return YES.
\item\label{alg:3:step4} Otherwise, let $\mathcal{C} \eqdef S_c \cap N(u)$. If $\cal{C} = \emptyset$ then return YES.
\item\label{alg:3:step5} Let $i$ denote the index of $u$ in $N(v)$ and let $b$ denote the {\em bucket} of $i$ w.r.t. $\sqrt{n}$ (see Definition~\ref{def:bucket}).
\item\label{alg:3:step6} For each $x \in \cal{C}$:
\begin{enumerate}
\item\label{alg:3:step6a} Go over every $j < i$ such that $j$ is in bucket $b$ and return YES if for every such $j$, $N(v)[j]$ does not belong to the cluster of $x$. 
\end{enumerate}
\item Return NO.
\end{enumerate}
\end{algorithm}

\def\claims1{
\begin{claim}
With high probability, the stretch factor of the spanner constructed by Algorithm~\ref{alg:spanner3} is $3$.
\end{claim}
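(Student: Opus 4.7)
The plan is to show, conditioned on the high-probability event of Claim~\ref{clm:numberc}, that every edge $\{u,v\}\in E$ that Algorithm~\ref{alg:spanner3} removes is bridged by a path of length at most $3$ in the spanner. Claim~\ref{clm:numberc} lets me declare that each heavy vertex $w$ in class $c$ lies in the cluster of every $x\in S_c\cap N(w)$, and in particular each spoke edge of such a cluster is placed in the spanner by step~\ref{alg:3:step3} of the algorithm whenever that spoke is queried.

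I would next characterize when Algorithm~\ref{alg:spanner3} removes an edge $\{u,v\}$ with $\deg(u)\geq\deg(v)$. Tracing the code, removal forces: both endpoints heavy (else step~\ref{alg:3:step1} adds it), $v\notin S_c$ for $c$ the class of $u$ (else step~\ref{alg:3:step3}), $\mathcal{C}:=S_c\cap N(u)\neq\emptyset$ (else step~\ref{alg:3:step4}), and for \emph{every} $x\in\mathcal{C}$ the existence of some $j<i$ in bucket $b$ with $N(v)[j]$ lying in the cluster of $x$ (else step~\ref{alg:3:step6} would have returned YES). Here $i$ is the index of $u$ in $N(v)$ and $b$ its bucket w.r.t.\ $\sqrt{n}$.

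I then exhibit the three-hop witness. Pick any $x^*\in\mathcal{C}$; then $u$ lies in the cluster of $x^*$. Let $j^*$ be the minimum $j<i$ in bucket $b$ with $w^*:=N(v)[j^*]$ in the cluster of $x^*$ (such $j^*$ exists by the failure of step~\ref{alg:3:step6}). My candidate path is $u\to x^*\to w^*\to v$. The first two edges are spokes of the cluster of $x^*$ and are in the spanner by the preceding paragraph. For the last edge $\{w^*,v\}$, I invoke Algorithm~\ref{alg:spanner3} on this query: $w^*$ lies in class $c$ (since $w^*\in$ cluster of $x^*\in S_c$). In the sub-case $\deg(w^*)\geq\deg(v)$, the algorithm labels $w^*$ as the higher-degree endpoint, uses class $c$, and considers the candidate set $S_c\cap N(w^*)\ni x^*$; the index of $w^*$ in $N(v)$ is $j^*$, which lies in bucket $b$, and by the minimality of $j^*$ no earlier index in bucket $b$ corresponds to a neighbor of $v$ in the cluster of $x^*$, so step~\ref{alg:3:step6} returns YES at the witness $x^*$. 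In the mirror sub-case $\deg(v)>\deg(w^*)$, a degree comparison using $\deg(w^*)>2^{c-1}\sqrt{n}$ and $\deg(v)\leq\deg(u)\leq 2^c\sqrt{n}$ shows that $v$ too lies in class $c$, and a symmetric analysis driven from $v$'s side using $v$'s own centers in $S_c\cap N(v)$ handles that sub-case.

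The main obstacle is the third edge $\{w^*,v\}$: the algorithm's convention that the higher-degree endpoint drives both the class and the bucketing forces a case split on the relative degrees of $w^*$ and $v$, and in each case one must match the algorithm's bucketing to the minimality property of $j^*$ that we extracted from the original query. The linchpin is precisely that minimality, which exactly matches the stopping condition of step~\ref{alg:3:step6} whenever the degree orientation of $\{w^*,v\}$ lines up with the way $\{u,v\}$ was processed.
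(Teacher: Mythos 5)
Your plan mirrors the paper's at the top level: exhibit a length-$3$ path $u \to x^* \to w^* \to v$ through a center $x^*$ of a cluster containing $u$. Where the paper argues the existence of a spanner edge between $v$ and the cluster of $u$ ``by induction on the index of $u$ in $N(v)$,'' you go straight to the minimum index $j^*$ in the bucket $b$ and argue directly that the query $\{w^*,v\}$ fires Sub-Step~\ref{alg:3:step6a} on the witness $x^*$. In the sub-case $\deg(w^*)\geq\deg(v)$ your argument is correct, and is, in effect, a cleaner and more concrete version of what the paper's terse inductive appeal must be doing at the bottom of its induction.

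The mirror sub-case $\deg(v)>\deg(w^*)$ is, however, not actually handled, and this is a genuine gap. When $v$ is the higher-degree endpoint of the query $\{w^*,v\}$, Algorithm~\ref{alg:spanner3} buckets the index of $v$ inside $N(w^*)$ (not the index $j^*$ of $w^*$ inside $N(v)$), and iterates over centers $S_c\cap N(v)$ (not your $x^*\in S_c\cap N(u)$). Neither $x^*$ nor the minimality of $j^*$ appears in those checks, so there is no reason for Sub-Step~\ref{alg:3:step6a} to return YES on $\{w^*,v\}$, and ``a symmetric analysis driven from $v$'s side'' does not produce the edge $\{w^*,v\}$ nor an alternative length-$\leq 3$ path. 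You acknowledge the case split exists, but the final sentence of your proposal concedes the argument only goes through ``whenever the degree orientation \ldots\ lines up,'' which is precisely the sub-case that needs a separate idea. (It is worth noting the paper's one-line ``induction on the index of $u$ in $N(v)$'' appears to run into the same orientation problem: once the pivot $w^*$ can have smaller degree than $v$, the next query is bucketed in a different adjacency list and the inductive hypothesis no longer directly applies; any complete proof will need to either show this sub-case cannot cause trouble or argue it separately.)

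A smaller issue: your claim that each spoke edge $\{x^*,w^*\}$ ``is placed in the spanner by step~\ref{alg:3:step3}'' is incorrect when $\deg(x^*)>\deg(w^*)>\sqrt{n}$. In that orientation, step~\ref{alg:3:step3} tests whether $w^*\in S_{c'}$ for $c'$ the class of $x^*$, which has nothing to do with $x^*\in S_c$; step~\ref{alg:3:step3} only covers the orientation where the center is the lower-degree endpoint (and step~\ref{alg:3:step1} covers light centers). The paper's own proof implicitly relies on these spoke edges being present without justifying it either, so this is a shared oversight, but your explicit attribution to step~\ref{alg:3:step3} is wrong as stated and should be corrected or at least flagged.
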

\begin{proof}
Let $\{u, v\}$ be an edge in $E$ such that $\deg(u) \geq \deg(v)$.
We will show that there exists a path of length at most $3$ between $u$ and $v$ in the the spanner constructed by Algorithm~\ref{alg:spanner3} denoted by $G' = (V, E')$.
If $\deg(v) \leq \sqrt{n}$ then $\{u, v\} \in E'$ and we are done.
Otherwise, if there exists a cluster $C$ such that $u$ and $v$ are both belong to $C$ then in $G'$ they are both connected by an edge to the center of $C$. Thus there exists a path of length at most $2$ between $u$ and $v$ in $G'$. 
Otherwise, let $C'$ be a cluster for which $u$ belongs to. We claim that $v$ is adjacent to $C'$ in $G'$. This follows by induction on the index of $u$ in $N(v)$ and Sub-Step~\ref{alg:3:step6a}. 
\end{proof}

\begin{claim}
The probe and time complexity of  Algorithm~\ref{alg:spanner3} is $O(\sqrt{n}\log n)$.
\end{claim}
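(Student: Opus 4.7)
The plan is to bound the probe and time complexity of each step of Algorithm~\ref{alg:spanner3} separately, and then argue that the dominant contributions are $\widetilde{O}(\sqrt{n})$.

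First I would handle the easy preliminary steps. Step~\ref{alg:3:step1} requires only a single degree probe on $v$. Step~\ref{alg:3:step2} similarly requires a single degree probe on $u$ to determine its class $c$. Step~\ref{alg:3:step3} is a $O(1)$ membership test in $S_c$, which is stored in local memory (or derived from the random tape). Step~\ref{alg:3:step5} needs one adjacency probe to locate the index $i$ of $u$ in $N(v)$, after which computing the bucket $b$ is $O(1)$.

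The two non-trivial steps are Step~\ref{alg:3:step4} and Step~\ref{alg:3:step6}. For Step~\ref{alg:3:step4}, computing $\mathcal{C} = S_c \cap N(u)$ requires iterating over every vertex in $S_c$ and performing an adjacency probe against $u$. Since $|S_c| = x_c = \sqrt{n}\log n / 2^{c-1} = O(\sqrt{n}\log n)$, this step costs $O(\sqrt{n}\log n)$ probes and time. For Step~\ref{alg:3:step6}, I would invoke Claim~\ref{clm:numberc} to bound $|\mathcal{C}| = O(\log n)$ with high probability. For each center $x \in \mathcal{C}$, Sub-Step~\ref{alg:3:step6a} iterates over the (at most $\sqrt{n}$) indices $j < i$ lying in bucket $b$. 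For each such $j$, we first issue a neighbor probe to obtain $w = N(v)[j]$, then determine the class $c'$ of $w$ via a single degree probe, and finally check whether $w$ belongs to the cluster of $x$ --- which amounts to verifying that $x \in S_{c'}$ (an $O(1)$ local check) and that $x \in N(w)$ via one adjacency probe. Hence each index $j$ costs $O(1)$ probes, and the whole step costs $O(|\mathcal{C}| \cdot \sqrt{n}) = O(\sqrt{n}\log n)$.

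Summing the contributions, all steps are either $O(1)$ or $O(\sqrt{n}\log n)$, which gives the desired bound. The only subtlety I expect to have to justify carefully is that the cluster-membership test in Step~\ref{alg:3:step6a} can indeed be performed in $O(1)$ probes per index $j$ (i.e.\ that the cluster of $x$ can be tested without probing more of $G$ than $w$'s degree and the $\langle x, w \rangle$ adjacency), which follows from the fact that cluster membership is defined purely by the pair (class of $w$, adjacency of $w$ to $x$). Everything else is a straightforward accounting of the probe budget against the sizes $|S_c|$, $|\mathcal{C}|$, and the bucket length $\sqrt{n}$.
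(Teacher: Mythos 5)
Your proposal is correct and matches the paper's argument step for step: both analyses charge Steps~\ref{alg:3:step1}--\ref{alg:3:step3} and~\ref{alg:3:step5} as $O(1)$, bound Step~\ref{alg:3:step4} by iterating over the $O(\sqrt{n}\log n)$ centers with one adjacency probe each, and bound Step~\ref{alg:3:step6} by $|\mathcal{C}|\cdot\sqrt{n}$ using Claim~\ref{clm:numberc} for $|\mathcal{C}| = O(\log n)$, with the cluster-membership test done via one neighbor probe, one degree probe, and one adjacency probe. The only difference is that you spell out the cluster-membership test slightly more explicitly than the paper does, which is a clarification rather than a different route.
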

\begin{proof}
Steps~\ref{alg:3:step1}-\ref{alg:3:step2} can be implemented by making a single degree probe. Their time complexity is $O(1)$.
Step~\ref{alg:3:step3} can be implemented by accessing the random coins.
To implement Step~\ref{alg:3:step4} we need to go over all the vertices in $S_c$ (we may assume w.l.o.g. that we generate all the centers in advance as there are only $O(\sqrt{n}\log n)$ centers) and check whether they are in $N(u)$ (by making a single adjacency probe). 
Thus the probe (and time) complexity of this step is $O(\sqrt{n}\log n)$. 
Step~\ref{alg:3:step5} can be implemented by a single adjacency probe.
The total number of vertices we check in Sub-Step~\ref{alg:3:step6a} is bounded by the size of $\mathcal{C}$ times the size of a bucket which is $\sqrt{n}$. For each vertex we check we make a single neighbor and then we check whether it belongs to the cluster of a specific center. The latter can be implemented by making a single degree probe and a single adjacency probe. 
By Claim~\ref{clm:numberc}, the size of $\mathcal{C}$ is bounded by $O(\log n)$, thus the probe (and time) complexity of Step~\ref{alg:3:step6} is $O(\sqrt{n} \log n)$. The claim follows.
\end{proof}
}

\ifnum\confver=0
\claims1
\fi

\begin{claim}
With high probability, the number of edges of the spanner constructed by Algorithm~\ref{alg:spanner3} is $\tO(n^{1+1/2})$.
\end{claim}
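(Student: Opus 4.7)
The plan is to partition the edges of the spanner according to which step of Algorithm~\ref{alg:spanner3} caused them to be added, and bound each class separately. Step~\ref{alg:3:step1} contributes at most $n\cdot \sqrt{n}=n^{3/2}$ edges, since each edge added there is incident to a vertex of degree at most $\sqrt{n}$, and such a vertex contributes at most $\sqrt{n}$ edges. Step~\ref{alg:3:step3} contributes $\tilde{O}(n)$ edges: for each heavy vertex $u$ in class $c$ (w.r.t.\ $\sqrt{n}$), Claim~\ref{clm:numberc} gives $|S_c\cap N(u)|=O(\log n)$ with high probability, so only $O(\log n)$ edges incident to $u$ can be added. Step~\ref{alg:3:step4} contributes no edges with high probability, because Claim~\ref{clm:numberc} ensures $\mathcal{C}=S_c\cap N(u)$ is nonempty for every heavy $u$ in class $c$.

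The main work is bounding the contribution of Step~\ref{alg:3:step6}. I would charge each such edge to a triple (vertex $v$, bucket $b$ of $v$ w.r.t.\ $\sqrt{n}$, center $x$): by Sub-Step~\ref{alg:3:step6a}, for any fixed choice of $v$, bucket $b$, and center $x$, at most one edge $\{u,v\}$ is retained, namely the edge whose endpoint $u\in N(v)[b]$ has minimum index in $N(v)$ among those lying in the cluster of $x$. Hence the number of Step~\ref{alg:3:step6} edges incident to $v$ is at most the number of relevant (bucket, center) pairs.

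For the summation, I would exploit the key geometric relation $\deg(u)\geq\deg(v)$: if $u$ is in class $c$ w.r.t.\ $\sqrt{n}$ then $\deg(v)\leq\deg(u)\leq 2^{c}\sqrt{n}$, so only classes $c$ with $2^{c}\sqrt{n}\geq\deg(v)$ can contribute. Letting $c^{\star}(v)$ denote the smallest such index, and using $|S_c|=x_c=\sqrt{n}\log n/2^{c-1}$, the number of Step~\ref{alg:3:step6} edges incident to $v$ is bounded by
\[
\sum_{c\geq c^{\star}(v)} \left\lceil \tfrac{\deg(v)}{\sqrt{n}} \right\rceil \cdot |S_c|
= O\!\left(\deg(v)\log n \sum_{c\geq c^{\star}(v)} 2^{-(c-1)}\right)
= O\!\left(\tfrac{\deg(v)\log n}{2^{c^{\star}(v)-1}}\right)
= O(\sqrt{n}\log n),
\]
where the last equality uses $2^{c^{\star}(v)}=\Theta(\deg(v)/\sqrt{n})$ by definition of $c^{\star}(v)$. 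Summing over the $n$ vertices gives $\tilde{O}(n^{3/2})$ Step~\ref{alg:3:step6} edges, and combining with the bounds for Steps~\ref{alg:3:step1},~\ref{alg:3:step3},~\ref{alg:3:step4} yields the claimed $\tilde{O}(n^{1+1/2})$ total.

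The conceptually delicate point is the third paragraph: a naive per-vertex-per-class accounting would charge $\tilde{O}(\deg(v))$ edges to each $v$ and yield $\tilde{O}(m)=\tilde{O}(n^2)$, which is useless. The fix is to observe that the WLOG direction convention in the query (always $\deg(u)\geq\deg(v)$) forces the relevant classes $c$ of $u$ to satisfy $2^c\sqrt{n}\gtrsim\deg(v)$, at which point the geometric decay of $|S_c|$ in $c$ cancels the $\deg(v)/\sqrt{n}$ bucket count and produces the uniform $\tilde{O}(\sqrt{n})$ per-vertex bound. Everything else is a routine union bound invoking Claim~\ref{clm:numberc} to absorb the failure probability of the high-probability events into a single polylogarithmic factor.
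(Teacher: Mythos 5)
Your proof is correct and follows essentially the same route as the paper's: bound Steps~\ref{alg:3:step1} and~\ref{alg:3:step3} directly, then for Step~\ref{alg:3:step6} charge each retained edge to a $(v,\text{bucket},\text{center})$ triple and exploit that the geometric decay of $|S_c|$ cancels the bucket count once the WLOG $\deg(u)\geq\deg(v)$ convention pins down $2^{c}\gtrsim \deg(v)/\sqrt{n}$. Your write-up is a bit more explicit than the paper's (e.g.\ you note Step~\ref{alg:3:step4} contributes nothing w.h.p., and you make the summation over classes $c\geq c^{\star}(v)$ precise), but the underlying accounting is identical.
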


\begin{proof}
The number of edges added to $E'$ due to Step~\ref{alg:3:step1} is at most $n^{3/2}$.
By the bound on the number of centers, the number of edges added to $E'$ due to Step~\ref{alg:3:step3} is $O(n^{3/2}\log n)$.
To analyse the number of edges added to $E'$ due to Step~\ref{alg:3:step6} consider an edge $\{u, v\}$ such that $\deg(u) \geq \deg(v)$, $\deg(v) > \sqrt{n}$ and $v\notin S_c$, where $c$ denotes the class of $u$ w.r.t. $\sqrt{n}$.
Since $u$ is in class $c$ it follows that $\deg(u) \leq 2^{c}\sqrt{n}$. Since $\deg(v) \leq \deg(u)$ it follows that $N(v)$ has at most $2^{c}$ buckets. 
By Sub-Step~\ref{alg:3:step6a}, for any cluster $C$, the number of edges in $E'$ that are incident to $v$ and a vertex from $C$ is at most $2^{c}$ (since we add to $E'$ at most a single edge for each bucket of $N(v)$). Since the number of clusters of class $c$ is $O(\sqrt{n}\log n/2^c)$, the total number of clusters of class greater or equal to $c$ is $O(\sqrt{n}\log n/2^c)$ as well.
Therefore, the total number of edges that are incident to $v$ and added to $E'$ due to Step~\ref{alg:3:step6} is $O(\sqrt{n}\log n)$.
By Claim~\ref{clm:numberc}, w.h.p., the number of edges that are added due to Step~\ref{alg:3:step4} is $0$. 
We conclude that the $|E'| = O(n^{3/2}\log n)$, as desired.
\end{proof}

\section{LCA for constructing \texorpdfstring{$5$}{5}-spanners}\label{sec:5spanner}
In this section, we prove the following theorem.
\begin{theorem}\label{thm:5-main}
There exists an LCA that given access to an $n$-vertex simple undirected graph $G$, constructs a $5$-spanner of $G$ with $\tO(n^{1+1/3})$ edges whose probe complexity and time complexity are $\tO(n^{2/3})$.
\end{theorem}

Our algorithm for constructing $5$-spanners also proceeds by forming clusters around centers and connecting the different clusters. 
For the sake of presentation, we first describe our local algorithm from a global point of view (see algorithm~\ref{algGlobal:spanner5}). In Section~\ref{sec:localimp5} we describe the local implementation of this algorithm.

As in the algorithm for constructing $3$-spanners, the clusters are formed around randomly selected centers only that now we have two types of clusters (and centers), {\em heavy-clusters} and {\em light-clusters} that will be described in the sequel.

\paragraph*{The selection of the first type of centers.}
The selection of the first type of centers proceeds as follows. We define $a \eqdef \log n^{1/3}$ sets of centers $S^1_1, \ldots, S^1_a$. 
   For every $i\in [a]$, we pick u.a.r. $y_i$ vertices to be in $S^1_i$ where 
$y_1 = n^{1/3}\log n$ and $y_{i+1} = y_i/2$ for every $i \in [a-1]$. 
The clusters which are formed around the first type of centers are the {\em heavy-clusters}. The formation of the heavy-clusters is described in Step~\ref{algGlobal5:step2} of Algorithm~\ref{algGlobal:spanner5}.

\paragraph*{The selection of the second type of centers.}
The selection of the second type of centers proceeds as follows. We define $b \eqdef \log n^{1/3}$ sets of centers $S^2_1, \ldots, S^2_b$. 
   For every $i\in [b]$, we pick u.a.r. $x_i$ vertices to be in $S^2_i$ where 
$x_1 = n^{2/3}\log n$ and $x_{i+1} = x_i/2$ for every $i \in [b-1]$. 

The clusters which are formed around the second type of centers are the {\em light-clusters}. The formation of the light-clusters is described in Step~\ref{algGlobal5:step3} of Algorithm~\ref{algGlobal:spanner5}.

\medskip\noindent
The way we connect the different clusters is described in Steps~\ref{algGlobal5:step4} and~\ref{algGlobal5:step5}.

\begin{algorithm}
\caption{Global algorithm for constructing $5$-spanners \label{algGlobal:spanner5}}
\textbf{Input:} A graph $G=(V,E)$.\\
\textbf{Output:} Constructs a $5$-spanner of $G$, $G'=(V, E')$.
\begin{enumerate}
\item\label{algglobstep1}\label{step:span5-1} For every $v$ such that $\deg(v) \leq n^{1/3}$ add to $E'$ all the edges that are incident to $v$.
\item\label{algGlobal5:step2} Forming heavy-clusters:
\begin{enumerate}
    \item\label{step:span5-2a} For each vertex $v$ such that $\deg(v) \geq n^{2/3}$ we define the centers of $v$ to be $N(v) \cap S^1_c$ where $c$ is the class of $v$ w.r.t. $n^{2/3}$ (see Definition~\ref{def:class}).
    For every center $s$ of $v$, $v$ joins the cluster of $s$ by adding the edge $\{s, v\}$ to $E'$.
    \item\label{algGlobal5:step2b}\label{step:span5-2b} Each vertex $v$ such that $n^{1/3} < \deg(v) < n^{2/3}$ sample u.a.r. $y \eqdef \Theta(\log n)$ of its neighbors. Let $R_v$ denote this set. The {\em representative} of $v$ is defined to be the vertex, $r$, of minimum id in $R_v$ such that $\deg(r) \geq n^{2/3}$ (if such vertex exists).
    If $v$ has a representative, $r$, then the edge $\{v, r\}$ is added to $E'$ (and hence $v$ joins all the clusters of $r$). 
    See Figure~\ref{fig:5spanner_rep} for illustration of forming heavy-clusters with {\em representative}.

\end{enumerate}
\item\label{algGlobal5:step3} Forming light-clusters:
\begin{enumerate}
    \item\label{step:span5-3a} For each vertex $v$ such that $n^{1/3} < \deg(v) < n^{2/3}$ for which $v$ does not have a representative we define the centers of $v$ to be $N(v) \cap S^2_c$ where $c$ is the class of $v$ w.r.t.  $n^{1/3}$ (see Definition~\ref{def:class}).
    For every center $s$ of $v$, $v$ joins the cluster of $s$ by adding the edge $\{s, v\}$ to $E'$.
\end{enumerate}
\item\label{algGlobal5:step4} Connecting vertices to adjacent heavy-clusters:
\begin{enumerate}
    \item Let $\{u, v\}$ be such that $\deg(u) \geq \deg(v)$ and $u$ belongs to a heavy-cluster. For each cluster $C$ that $u$ belongs to, do:
    \begin{enumerate}
    \item Partition the interval $[\deg(v)]$ into sequential intervals, which we refer to as buckets, of size $n^{2/3}$: $b_1, \ldots, b_s$ (where only $b_s$ may have size which is smaller than $n^{2/3}$). 
    \item\label{step4c}\label{step:span5-4c} For each $i\in [s]$, go over every $j \in b_i$ in increasing order and check if $N(v)[j]$ belongs to $C$. If such $j$ is found, add $\{v, N(v)[j]\}$ to $E'$ and move to the next bucket.
    \end{enumerate}
\end{enumerate}
\item\label{algGlobal5:step5} Connecting adjacent light-clusters:

\begin{enumerate}
    \item Let $\{u, v\}$ be such that both $u$ and $v$ belong to different light-clusters. 
    For each light clusters $C_u$ and $C_v$ that $u$ and $v$ belong to, respectively, do:
    \begin{enumerate}
    \item Let $s_u$ and $s_v$ denote the centers of $C_u$ and $C_v$, respectively. Let $c_u$ and $c_v$ denote the classes of $u$ and $v$ w.r.t. $n^{1/3}$, respectively.
    \item\label{step:prev} Partition the vertices in $N(s_u)$ that belong to the cluster $C_u$ (namely, the neighbors of $s_u$ that are in class $c_u$ w.r.t. $n^{1/3}$) into subsets of size $n^{1/3}$ greedily by their index in $N(s_u)$, $S^u_1, \ldots, S^u_t$ (all the subsets are of size $n^{1/3}$ except from perhaps $S^u_t$).
    \item\label{step:prev2} Repeat Step~\ref{step:prev} for the vertices in $N(s_v)$ that belong to $C_v$ and let $S^v_1, \ldots, S^v_r$ denote the resulting subsets.
    \item\label{stepminrank}\label{step:span5-5d} For each $i\in [t]$ and $j \in [r]$, add the edge of minimum rank in $E(S^u_i, S^v_j)$ to $E'$ (if such edge exists).
    \end{enumerate}
    See Figure~\ref{fig:5-light} for an illustration of connecting to adjacent light-clusters.

\end{enumerate}
\end{enumerate}

\end{algorithm}
In the next couple of claims we prove that with high probability every vertex $v$ such that $\deg(v) > n^{1/3}$ joins at least one cluster and at most $O(\log n)$ clusters. 
To do so, we partition the vertices with degree greater than into $n^{1/3}$ into $3$ sets.
The first set, denoted by $H$, is the set of vertices, $v$, such that $\deg(v) \geq n^{2/3}$.
The second set is the set of vertices, $v$, such that $n^{1/3} < \deg(v) < n^{2/3}$ for which at least half of the vertices in $N(v)$ have degree at least $n^{2/3}$. We denote this set by $M_1$.
$M_2$ consists of the remaining vertices. Namely, $M_2$ is the set of vertices, $v$, such that $n^{1/3} < \deg(v) < n^{2/3}$ and for which less than half of the vertices in $N(v)$ have degree at least $n^{2/3}$.

\medskip
The implication of the next claim is that w.h.p. every vertex in $H$ joins at least one heavy-cluster and at most $O(\log n)$ heavy-clusters.

\input{figures/fig5ll.tex}

\begin{claim}\label{clm:numberc2}
With high probability, for every $v\in H$ it holds that $N(v) \cap S^1_c \neq \emptyset$ and that $|N(v) \cap S^1_c| = O(\log n)$ where $c\in [a]$ is the class of $v$ w.r.t. $n^{2/3}$. 
\end{claim}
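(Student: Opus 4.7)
The plan is to fix a vertex $v\in H$ in class $c$ w.r.t.\ $n^{2/3}$, and to show that $X_v \eqdef |N(v)\cap S^1_c|$ is concentrated tightly around its expectation by a Chernoff bound; a union bound over all $v\in H$ and all classes $c\in[a]$ then yields the claim.

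First I would compute the expectation. Since $S^1_c$ is obtained by picking $y_c = n^{1/3}\log n / 2^{c-1}$ vertices uniformly at random from $V$, each vertex of $V$ lies in $S^1_c$ (essentially) independently with probability $y_c/n$, so
\[
\mathbb{E}[X_v] \;=\; \deg(v)\cdot \frac{y_c}{n}.
\]
Using $\deg(v)\in[2^{c-1}n^{2/3}+1,\,2^c n^{2/3}]$ and substituting $y_c$, one gets
\[
\mathbb{E}[X_v] \;\in\; [\log n,\, 2\log n],
\]
so in particular $\mathbb{E}[X_v] = \Theta(\log n)$. (If $S^1_c$ is sampled with replacement the same bound holds up to an additive lower-order term; if it is sampled without replacement, the negative dependence only helps concentration.)

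Next, I would apply a multiplicative Chernoff bound to the sum of the (negatively correlated) indicator variables $\mathbf{1}[u\in S^1_c]$ for $u\in N(v)$. Taking deviation parameter $\delta=1/2$, there exists a constant $C>0$ such that
\[
\Pr\!\left[X_v \notin [\tfrac{1}{2}\log n,\, C\log n]\right] \;\leq\; 2\exp(-\Omega(\log n)) \;=\; n^{-\Omega(1)}.
\]
By choosing the hidden constant in $y_1 = n^{1/3}\log n$ large enough, the failure probability can be driven below $n^{-3}$.

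Finally, I would union bound over the at most $n$ vertices of $H$ and the $a=\log n^{1/3}$ classes. Since each bad event has probability at most $n^{-3}$, with probability at least $1-n^{-1}$ every $v\in H$ in its class $c$ satisfies $1\leq X_v = O(\log n)$, as required. The main (minor) obstacle is bookkeeping the constants in $y_c$ and in the Chernoff bound so that the expectation is at least a sufficiently large multiple of $\log n$ to beat the union bound; handling sampling without replacement only improves these bounds via standard negative-association arguments, so no additional difficulty arises there.
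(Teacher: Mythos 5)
Your proof is correct and is exactly the argument the paper has in mind: the paper states this claim (and its siblings for the $3$-spanner centers and for the light-cluster centers $S^2_c$) without proof, evidently regarding the Chernoff-plus-union-bound computation as routine. Your derivation $\mathbb{E}\bigl[|N(v)\cap S^1_c|\bigr]\in(\log n,\,2\log n]$ from $\deg(v)\in[2^{c-1}n^{2/3}+1,\,2^c n^{2/3}]$ and $y_c=n^{1/3}\log n/2^{c-1}$ is right, and your observation that the constant implicit in $y_1=n^{1/3}\log n$ must be chosen large enough so the per-vertex failure probability beats the union bound over the at most $n$ heavy vertices is precisely the right bookkeeping.
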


The implication of the next claim (when combined with Claim~\ref{clm:numberc2}) is that w.h.p. every vertex in $M_1$ joins, via a representative, at least one heavy-cluster and at most $O(\log n)$ heavy-clusters.
\begin{claim}\label{clm:numberc21}
With high probability, for every $v\in M_1$ it holds that $v$ has a representative. 
\end{claim}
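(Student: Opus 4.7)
The plan is a direct sampling argument followed by a union bound. Fix $v \in M_1$. By definition of $M_1$, at least $\deg(v)/2$ neighbors of $v$ have degree at least $n^{2/3}$ (i.e., are ``heavy''), so the fraction of heavy neighbors in $N(v)$ is at least $1/2$. Hence each individual sample in the set $R_v$ (of size $y = \Theta(\log n)$) falls on a heavy neighbor with probability at least $1/2$, independently across samples. If the sampling is without replacement, the fraction of heavy neighbors among the unsampled ones stays close to $1/2$: since $\deg(v) > n^{1/3}$ and $y = O(\log n)$, at any step at least $(\deg(v)/2 - y)/(\deg(v)-y) \geq 1/2 - o(1)$ of the remaining neighbors are heavy, so the conditional probability of hitting a heavy neighbor is bounded below by a constant.

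Therefore the probability that \emph{none} of the $y$ samples is heavy is at most $(1/2 + o(1))^y$. By choosing the constant hidden in $y = \Theta(\log n)$ sufficiently large (say $y \geq (c+1)\log_2 n$ for any desired constant $c$), this probability is at most $1/n^{c+1}$. When $R_v$ contains at least one heavy vertex, $v$ has a representative (namely the heavy vertex of minimum id in $R_v$), so the probability that $v$ fails to have a representative is at most $1/n^{c+1}$.

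A union bound over the at most $n$ vertices in $M_1$ shows that with probability at least $1 - 1/n^c$, every vertex of $M_1$ has a representative, which proves the claim. The argument is essentially routine; the only mild subtlety, which I would make explicit, is verifying that the ``heavy fraction is at least $1/2$'' property is preserved throughout a sample-without-replacement of size $O(\log n)$, and this is immediate from $\deg(v) > n^{1/3} \gg \log n$.
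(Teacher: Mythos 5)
Your proof is correct and follows essentially the same approach as the paper; the paper's proof is a one-line appeal to ``w.h.p.\ $R_v$ contains a heavy neighbor'' given that at least half of $N(v)$ is heavy, and you have simply spelled out the underlying $(1/2)^{\Theta(\log n)}$ failure bound, the union bound over $M_1$, and the (correctly dismissed) without-replacement subtlety.
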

\begin{proof}
Let $v \in M_1$. Consider Step~\ref{algGlobal5:step2b} of Algorithm~\ref{algGlobal:spanner5}. Since at least half of the neighbors of $v$ have degree at least $n^{2/3}$, it follows that w.h.p. $R_v \neq \emptyset$ and so $v$ has a representative. 
\end{proof}

The implication of the next claim is that w.h.p. every vertex in $M_2$ that does not have a representative joins at least one light-cluster and at most $O(\log n)$ light-clusters.
\begin{claim}\label{clm:numberc22}
With high probability, for every $v\in M_2$ it holds that $N(v) \cap S^2_c \neq \emptyset$ and that $|N(v) \cap S^2_c| = O(\log n)$ where $c\in [b]$ is the class of $v$ w.r.t. $n^{1/3}$. 
\end{claim}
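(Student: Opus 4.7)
The plan is to mimic the Chernoff plus union-bound argument that establishes Claims~\ref{clm:numberc} and~\ref{clm:numberc2}, since the scaling of the center-set sizes $x_c$ is designed so that the expected number of centers in $N(v)$ is $\Theta(\log n)$ uniformly in the class $c$. First I would fix a vertex $v\in M_2$, and let $c\in[b]$ denote the class of $v$ w.r.t.\ $n^{1/3}$, so that $2^{c-1}n^{1/3}+1\le \deg(v)\le 2^c n^{1/3}$. Recall that $|S^2_c|=x_c = n^{2/3}\log n/2^{c-1}$ and the set $S^2_c$ is chosen uniformly at random.

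Next I would compute the expected intersection size. Since each vertex of $V$ is placed in $S^2_c$ with probability $x_c/n$ (and the choices are negatively correlated, as $S^2_c$ is a uniform set of prescribed size), the expected value of $|N(v)\cap S^2_c|$ is $\deg(v)\cdot x_c/n$, which lies in $[\tfrac{1}{2}\log n,\, 2\log n]$ by the class bounds on $\deg(v)$ and the choice of $x_c$. If the construction restricts $S^2_c$ to non-heavy vertices, then I would use precisely the defining property of $M_2$ --- that at least half of $N(v)$ has degree less than $n^{2/3}$ --- to guarantee that at least $\deg(v)/2 \ge 2^{c-2} n^{1/3}$ neighbors of $v$ are eligible to be centers; this preserves the $\Theta(\log n)$ bound on the expectation at the cost of a constant factor.

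I would then apply a standard multiplicative Chernoff bound (in the without-replacement form, which is valid here by negative association of uniform-size samples) to obtain that for a sufficiently large constant $K$,
\[
\Pr\bigl[\,|N(v)\cap S^2_c| = 0\,\bigr] \le n^{-3}
\qquad\text{and}\qquad
\Pr\bigl[\,|N(v)\cap S^2_c| > K\log n\,\bigr] \le n^{-3}.
\]
A union bound over the at most $n$ vertices $v\in M_2$ then yields the claim with probability at least $1-O(1/n^2)$. The constants hidden in the definition of $x_i$ can be tuned freely to amplify the exponent, so this is unproblematic.

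I do not expect any genuine obstacle: the only subtle point is where the hypothesis $v\in M_2$ is actually used, and the answer is that it is needed precisely to lower-bound the number of eligible (non-heavy) neighbors of $v$ in the event that $S^2_c$ is drawn from the non-heavy vertices, which in turn preserves the $\Omega(\log n)$ expectation. Everything else --- the upper bound, the Chernoff step, and the final union bound --- is a direct replay of the argument already carried out for Claims~\ref{clm:numberc} and~\ref{clm:numberc2}, with $\sqrt{n}$ replaced by $n^{1/3}$ and $S_i$ replaced by $S^2_i$.
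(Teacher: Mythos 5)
The paper states Claim~\ref{clm:numberc22} without proof (as it does for the analogous Claims~\ref{clm:numberc} and~\ref{clm:numberc2}), presumably treating them all as standard Chernoff-plus-union-bound arguments. Your proof correctly fills in this omitted argument along exactly the lines the authors would have intended: compute the expectation $\deg(v)\cdot x_c/n$ using $x_c = n^{2/3}\log n/2^{c-1}$ and $\deg(v)\in(2^{c-1}n^{1/3},\,2^c n^{1/3}]$ to get $\Theta(\log n)$, apply a multiplicative Chernoff bound (valid via negative association of uniform fixed-size sampling), and union-bound over all $v$. Your handling of the two cases --- whether or not $S^2_c$ is restricted to non-heavy vertices --- is apt: the formal selection in Section~4 samples $S^2_i$ uniformly from $V$, in which case the $M_2$ hypothesis plays no role in the proof and appears in the claim only because $M_2$ vertices are the only ones that need light-clusters; the informal description in Section~1.4 restricts the centers to non-heavy vertices, in which case the $M_2$ condition is needed exactly as you say to preserve an $\Omega(\log n)$ expectation. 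Either way the argument goes through. (One tiny slip: with the unrestricted sampling the expectation is strictly above $\log n$, not merely $\tfrac12\log n$; the $\tfrac12$ factor is only needed in the restricted case.)
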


The following corollary follows directly from Claims~\ref{clm:numberc2}-\ref{clm:numberc22}.
\begin{corollary}\label{cor1}
With high probability every vertex $v$ such that $\deg(v) > n^{1/3}$ joins at least one cluster and at most $O(\log n)$ clusters.
\end{corollary}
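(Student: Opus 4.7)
The plan is to observe that every vertex $v$ with $\deg(v) > n^{1/3}$ falls into exactly one of the three sets $H$, $M_1$, or $M_2$ defined just above the three claims, and then to combine Claims~\ref{clm:numberc2}, \ref{clm:numberc21}, and \ref{clm:numberc22} via a union bound. So first I would remark that $H \cup M_1 \cup M_2$ covers precisely the set of vertices with degree exceeding $n^{1/3}$, since the partition is by degree thresholds (with $M_1$ and $M_2$ splitting the medium range according to whether more or fewer than half the neighbors are heavy).

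Next, I would go through the three cases. For $v \in H$, Claim~\ref{clm:numberc2} directly supplies a center in $N(v) \cap S^1_c$, so $v$ joins at least one heavy-cluster in Step~\ref{step:span5-2a}, and the same claim bounds the number of such centers by $O(\log n)$, hence $v$ joins at most $O(\log n)$ heavy-clusters. For $v \in M_1$, Claim~\ref{clm:numberc21} guarantees a representative $r \in H$; by Step~\ref{step:span5-2b}, $v$ joins every cluster that $r$ joins, and applying Claim~\ref{clm:numberc2} to $r$ shows that $r$ (and hence $v$) belongs to between $1$ and $O(\log n)$ heavy-clusters. For $v \in M_2$, I would split into two sub-cases according to whether $v$ happens to have a representative: if it does, the $M_1$ argument applies verbatim; otherwise Step~\ref{step:span5-3a} applies and Claim~\ref{clm:numberc22} gives a center in $N(v) \cap S^2_c$ together with the $O(\log n)$ upper bound, so $v$ joins between $1$ and $O(\log n)$ light-clusters.

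Finally, I would handle the ``with high probability'' quantifier by taking a union bound over the (at most $n$) vertices and over the three events that may fail for each vertex. Since each of Claims~\ref{clm:numberc2}, \ref{clm:numberc21}, \ref{clm:numberc22} is stated as holding with high probability (i.e.\ with failure probability at most $n^{-c}$ for an adjustable constant $c$), choosing the constants large enough absorbs the union-bound loss and yields the simultaneous high-probability bound asserted by the corollary.

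There is no real obstacle here: the corollary is essentially a bookkeeping statement that records what we have already established, so the only thing to be careful about is making sure the three sets exhaust $\{v : \deg(v) > n^{1/3}\}$ and that the $M_1$ argument correctly inherits both the lower and upper bound from the representative's guarantees.
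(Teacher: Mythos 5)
Your proposal is correct and matches the paper's reasoning exactly: the paper simply asserts that the corollary "follows directly from Claims~\ref{clm:numberc2}--\ref{clm:numberc22}," and your case split over $H$, $M_1$, and $M_2$ (with the further sub-split of $M_2$ by whether a representative exists, which is the one detail the paper glosses over) together with a union bound is precisely the intended argument.
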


\begin{claim}
With high probability, $|E'| = \tO(n^{1+1/3})$.
\end{claim}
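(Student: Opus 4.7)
The plan is to bound separately the number of edges added by each step of Algorithm~\ref{algGlobal:spanner5} and sum them. The straightforward contributions come from Step~\ref{step:span5-1} (which adds $\leq n\cdot n^{1/3}=n^{4/3}$ edges), Step~\ref{step:span5-2a} and Step~\ref{step:span5-3a} (each adds one edge per (vertex, center) pair, so by Claims~\ref{clm:numberc2} and~\ref{clm:numberc22} each yields $O(n\log n)$ edges), and Step~\ref{step:span5-2b} (one edge per vertex in $M_1\cup M_2$, hence $O(n)$).

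The first non-trivial bound concerns Step~\ref{step:span5-4c}, which I would handle by a class-wise double count. Group heavy-clusters by the class $c$ of their center in $S^1_c$, so that $|S^1_c| = O(n^{1/3}\log n/2^c)$. For any $u$ in a class-$c$ heavy-cluster $C$ we have $\deg(u) \leq 2^c n^{2/3}$: directly if $u \in H$ is of class $c$ w.r.t.\ $n^{2/3}$, or via a class-$c$ representative if $u \in M_1$ (in which case even $\deg(u) < n^{2/3}$). Consequently, whenever the algorithm processes an edge $\{u,v\}$ with $u \in C$ and $\deg(u) \geq \deg(v)$, we have $\deg(v) \leq 2^c n^{2/3}$, so $N(v)$ is partitioned into at most $2^c$ buckets of size $n^{2/3}$, and Step~\ref{step:span5-4c} adds at most $2^c$ edges per $(v, C)$ pair (regardless of how many $u$'s trigger it). Bounding the number of such pairs by $n\cdot|S^1_c|$, the edges attributable to class $c$ number $O(n \cdot n^{1/3}\log n / 2^c \cdot 2^c) = O(n^{4/3}\log n)$; summing over the $a=O(\log n)$ classes gives $\tilde{O}(n^{4/3})$.

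For Step~\ref{step:span5-5d}, my plan is to bound the total number of buckets across all light-clusters. The key observation is that every center in $S^2$ is drawn only from vertices of degree below $n^{2/3}$, so each light-cluster contains at most $n^{2/3}$ vertices and hence is partitioned into at most $n^{1/3}$ buckets. Combining Corollary~\ref{cor1} (each vertex lies in $O(\log n)$ light-clusters) with $|S^2| = \tilde{O}(n^{2/3})$, the total bucket count is at most $\sum_C \lceil |C|/n^{1/3}\rceil \leq O(n\log n/n^{1/3}) + |S^2| = \tilde{O}(n^{2/3})$. Since Step~\ref{step:span5-5d} adds at most one edge per pair of buckets from different light-clusters, this contributes at most $\tilde{O}(n^{2/3})^2 = \tilde{O}(n^{4/3})$ edges. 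Summing all contributions yields $|E'| = \tilde{O}(n^{4/3}) = \tilde{O}(n^{1+1/3})$.

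The main obstacle I anticipate is the class-wise double count for Step~\ref{step:span5-4c}: the argument relies on an exact cancellation between the geometric decay $|S^1_c| \sim 1/2^c$ and the geometric growth $\sim 2^c$ in the per-pair bucket bound, so that each class contributes the same order and the sum loses only a polylogarithmic factor. Step~\ref{step:span5-5d} is more routine once one uses that $S^2 \subseteq V \setminus H$ to cap every light-cluster size at $n^{2/3}$; without this restriction the cluster sizes (and hence the bucket count) could blow up.
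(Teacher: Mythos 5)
Your proof is correct and reaches the same bound via essentially the same mechanism as the paper. The contributions from Steps~\ref{step:span5-1}, \ref{step:span5-2a}, \ref{step:span5-2b}, \ref{step:span5-3a} are handled identically, and your bucket count for Step~\ref{step:span5-5d}, namely $\sum_C\lceil|C|/n^{1/3}\rceil \le (\sum_C|C|)/n^{1/3}+|S^2|=\tilde O(n^{2/3})$ using Corollary~\ref{cor1}, is the paper's count. For Step~\ref{step:span5-4c} you organize the tally per class $c$ of the cluster's center, bounding $(v,C)$ pairs by $n\cdot|S^1_c|$ and noting a contributing pair yields at most $2^c$ edges, so that the $\sim 1/2^c$ decay of $|S^1_c|$ cancels the $\sim 2^c$ bucket count and each class contributes $O(n^{4/3}\log n)$; the paper instead fixes a vertex $v$ and shows that at most $\tilde O(n^{1/3})$ edges incident to $v$ are added in this step, using that only clusters of class at least $c$ are relevant to $v$ and there are $O(n^{1/3}\log n/2^c)$ of them. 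These are two bookkeepings of the same geometric cancellation; both are fine.

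One small inaccuracy in your commentary: you present the fact that $S^2$-centers are drawn from non-heavy vertices as essential to bounding the bucket count in Step~\ref{step:span5-5d}, but your own estimate already controls the total bucket count via the total-membership bound $\sum_C|C|=O(n\log n)$ from Corollary~\ref{cor1} plus the term $|S^2|$ for the one short bucket per cluster. No per-cluster size cap of $n^{2/3}$ is actually used, and the paper's argument does not invoke one either. The remark is harmless since your calculation is independent of it, but the claimed dependence is not real.
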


\begin{proof}
The number of edges added to $E'$ due to Step~\ref{step:span5-1} is at most $n^{1+1/3}$.
By Claims~\ref{clm:numberc2} and~\ref{clm:numberc22} the number of edges added to $E'$ due to Steps~\ref{step:span5-2a} and~\ref{step:span5-3a} is $\tO(n)$.
Since each vertex has at most one representative the number of edges added to $E'$ due to Step~\ref{step:span5-2b} is at most $n$.

Consider $\{u, v\}$ such that $\deg(u) \geq \deg(v)$ and $u$ belongs to a heavy-cluster $C$.
According to Step~\ref{step:span5-4c} we connect $v$ to $C$ by adding to $E'$ at most $\lceil\deg(v)/n^{2/3}\rceil$ edges (at most one edge for each bucket of $N(v)$).

If $\deg(u) \leq n^{2/3}$ then $\deg(v) \leq n^{2/3}$ as well and so $\lceil\deg(v)/n^{2/3}\rceil \leq 1$. Therefore the total number of edges that are incident to $v$ and added to $E'$ due to Step~\ref{step:span5-4c} is bounded by the total number of centers of the first type which is $\tO(n^{1/3})$.

Otherwise, let $c$ denote the class of $u$ w.r.t. $n^{2/3}$, then by definition $\deg(u) \leq 2^c \cdot n^{2/3}$. Therefore by our assumption $\deg(v) \leq 2^c \cdot n^{2/3}$ as well.
The number of centers in $S^1_c$ is $n^{1/3} \log n / 2^c$ and so the total number of centers in $\bigcup_{c \leq i \leq a} S^1_i$ is $O(n^{1/3} \log n / 2^c)$.
Observe that the number of edges which are incident to $v$ and added to $E'$ due to Step~\ref{step:span5-4c} is at most $\lceil\deg(v)/n^{2/3}\rceil$ times the number of centers in $\bigcup_{c \leq i \leq a} S^1_i$.
Thus the total number of edges that are incident to $v$ and added to $E'$ due to Step~\ref{step:span5-4c} is $\tO(n^{1/3})$ in this case as well.
Therefore, the total number of edges which are added to $E'$ in Step~\ref{step:span5-4c} is $\tO(n^{1+ 1/3})$.

By Claim~\ref{clm:numberc22} it follows that the total number of subsets partitioning the light clusters is $\tO(n^{2/3})$ as the size of each subset is $n^{1/3}$ except for at most $\tO(n^{2/3})$ subsets, and since each vertex may belong to $O(\log n)$ different clusters.
Since in Step~\ref{step:span5-5d} we add at most a single edge between a pair of subsets the total number of edges added to $E'$ due to this step is $\tO(n^{1 + 1/3})$.
This concludes the proof of the claim.

\end{proof}

\begin{claim}
With high probability, the stretch factor of the spanner constructed by Algorithm~\ref{algGlobal:spanner5} is $5$.
\end{claim}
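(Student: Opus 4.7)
The plan is to fix an arbitrary edge $\{u,v\} \in E$ and exhibit a path of length at most $5$ from $u$ to $v$ in $G'=(V,E')$. If $\{u,v\}\in E'$ this is trivial, so assume otherwise. Step~\ref{step:span5-1} immediately gives $\deg(u),\deg(v)>n^{1/3}$, and Corollary~\ref{cor1} guarantees (w.h.p.) that each of $u,v$ lies in at least one cluster. The proof then reduces to a short case analysis on the types of clusters containing $u$ and $v$, using two preliminary diameter facts.

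I would first record those two diameter bounds, since they drive the rest. A light-cluster around center $s$ is the star $\{(s,w):w\in C\}$ added in Step~\ref{step:span5-3a}, so its diameter in $G'$ is at most $2$. A heavy-cluster around center $s$ is the union of direct members of $s$ added in Step~\ref{step:span5-2a} and medium members joined through a representative $r$ in Step~\ref{step:span5-2b}; in the latter case both $\{s,r\}$ and $\{v,r\}$ sit in $E'$, so every two members of the heavy-cluster are connected by a path of length at most $4$ (worst case $v-r-s-r'-v'$).

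The case split follows the three edge types in the informal sketch. Case (a), same cluster: if $u,v$ share a cluster $C$, use the diameter of $C$ (at most $4$). Case (b), different clusters with at least one heavy: relabel so that an endpoint lying in a heavy-cluster is called $u$ and let $C_u$ denote that cluster; then $u\in N(v)\cap C_u$ is nonempty, so the bucket of $N(v)$ containing the index of $u$ forces Step~\ref{step:span5-4c} to place some edge $\{v,w\}$ with $w\in C_u$ into $E'$, giving the path $v-w-\cdots-u$ of length at most $1+4=5$. Case (c), different light-clusters: let $C_u,C_v$ with centers $s_u,s_v$ be the two light-clusters and let $S^u_i\ni u$, $S^v_j\ni v$ be the subsets from Steps~\ref{step:prev}--\ref{step:prev2}; since $\{u,v\}$ itself shows $E(S^u_i,S^v_j)\neq\emptyset$, Step~\ref{step:span5-5d} inserts some edge $\{u',v'\}\in S^u_i\times S^v_j$, yielding the path $u-s_u-u'-v'-s_v-v$ of length exactly $5$.

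The main point requiring care is that the three cases are exhaustive and that case (b) works regardless of which endpoint is the higher-degree one: the only fact used is that $u$ appears as a neighbor of $v$ inside $C_u$, which is automatic from $\{u,v\}\in E$ and $u\in C_u$. In particular, the mixed configuration ``$u$ in a heavy-cluster, $v$ in a light-cluster only'' falls into case (b) by letting $C_u$ be the heavy side. Beyond this bookkeeping the argument is immediate: each case exhibits an explicit $\le 5$-edge walk in $G'$, so w.h.p.\ every removed edge is spanned with stretch at most $5$.
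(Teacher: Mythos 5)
Your proof follows the same case analysis as the paper: condition on the w.h.p.\ event of Corollary~\ref{cor1}, then split into ``same cluster'', ``at least one heavy cluster'', and ``both light clusters'', using the same diameter bounds ($\le 4$ for heavy clusters, $\le 2$ for light clusters).

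The assertion in your last paragraph---that case (b) ``works regardless of which endpoint is the higher-degree one''---is not supported by Algorithm~\ref{algGlobal:spanner5} as written, and it hits exactly the spot the paper's own proof also glosses over. Step~\ref{algGlobal5:step4} fires only when the \emph{higher}-degree endpoint belongs to a heavy cluster. Consider $u$ a medium vertex with a representative (so $u$ is in a heavy cluster and, by Step~\ref{step:span5-3a}, in no light cluster), $v$ a medium vertex with no representative (so $v$ is only in light clusters), and $\deg(u)<\deg(v)$. Then Step~\ref{algGlobal5:step4} never fires for $\{u,v\}$, because the higher-degree endpoint $v$ is not in a heavy cluster; Step~\ref{algGlobal5:step5} never fires either, because $u$ is not in a light cluster. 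Relabeling so that $u$ is the heavy-cluster endpoint does not change which step fires, since the trigger of Step~\ref{algGlobal5:step4} is degree-ordered, not cluster-ordered. The paper's proof asserts in the residual case ``Otherwise, both $u$ and $v$ belong to different light clusters,'' which is likewise false in this configuration, so the same gap appears there. (The paper's local implementation, which branches on ``either $u$ or $v$ have a representative,'' hints that the intended algorithm should fire Step~\ref{algGlobal5:step4} whenever either endpoint is in a heavy cluster, bucketing the other endpoint's neighborhood; but Algorithm~\ref{algGlobal:spanner5} as stated does not say this, and neither proof justifies it.) Your overall plan is right; what is missing is an actual argument---or a fix to the algorithm---for the case where only the lower-degree endpoint lies in a heavy cluster.
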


\begin{proof}
Let $\{u, v\}$ be an edge which is not included in $E'$.
By Step~\ref{algglobstep1} of the  algorithm it follows that the degree of both $u$ and $v$ is greater than $n^{1/3}$.
By Corollary~\ref{cor1} w.h.p. all vertices with degree greater than $n^{1/3}$ join at least one cluster. 
In the rest of the proof we condition on the event that both $u$ and $v$ join at least one cluster.

Assume w.l.o.g. that $\deg(u) \geq \deg(v)$.
If both $u$ and $v$ belong to the same cluster (either heavy or light) then there exists a path of length at most $4$ in $G'$ between $u$ and $v$ as the diameter of each cluster is at most $4$.

If $u$ belongs to a heavy cluster, $C$, then by Step~\ref{step4c} of the algorithm it follows that there exists at least one edge in $E'$ which is incident to $v$ and a vertex in $C$.
Since the diameter of $C$ is at most $4$ it follows that there exists a path in $G'$ from $v$ to $u$.

Otherwise, both $u$ and $v$ belong to different light clusters $C_u$ and $C_v$. By Step~\ref{stepminrank}, there exists at least one edge in $E'$ which is incident to a vertex in $C_u$ and a vertex in $C_v$. Since the diameter of a light cluster is at most $2$ we obtain that there exists a path in $G'$ from $u$ to $v$ of length at most $5$.
This concludes the proof of the claim.
\end{proof}

\def\localimp5{
\subsection{The local implementation}\label{sec:localimp5}

In this section we prove the following claim. In the proof of the claim we also describe the local implementation of Algorithm~\ref{algGlobal:spanner5}.

\begin{claim}
The probe and time complexity of the local implementation of Algorithm~\ref{algGlobal:spanner5} is $\tO(n^{2/3}\log n)$.
\end{claim}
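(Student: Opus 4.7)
The plan is to simulate each step of Algorithm~\ref{algGlobal:spanner5} locally on a query $\{u, v\}$ and to bound the probes used across the simulation by $\tilde{O}(n^{2/3})$. A key subroutine $\texttt{InCluster}(w, s, c)$ decides whether $w$ belongs to the cluster centered at $s$ at class $c$: a single degree probe on $w$ reveals whether $w$ is heavy, medium, or light; for a heavy $w$ and $s \in S^1_c$ it tests adjacency $w \sim s$; for a medium $w$ it first reconstructs $w$'s representative $r$ (if any) by sampling the same $y = \Theta(\log n)$ neighbors of $w$ as in Step~\ref{algGlobal5:step2b} and picking the minimum-id heavy one, using $y$ neighbor and $y$ degree probes, and then issues one adjacency probe testing $s \sim r$ (for a heavy cluster) or $s \sim w$ (for a light cluster). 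The subroutine costs $\tilde{O}(1)$ probes.

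Given the query $\{u, v\}$, I first handle Step~\ref{step:span5-1} with one degree probe per endpoint. To test whether $\{u, v\}$ is added by any of Steps~\ref{step:span5-2a}, \ref{step:span5-2b}, or \ref{step:span5-3a}, I enumerate the centers adjacent to $u$ and $v$ by iterating through the relevant sets $S^1_c$ and $S^2_c$ with adjacency probes. Since $|\bigcup_i S^1_i| = \tilde{O}(n^{1/3})$ and $|\bigcup_i S^2_i| = \tilde{O}(n^{2/3})$, the enumeration costs $\tilde{O}(n^{2/3})$ probes and by Corollary~\ref{cor1} yields the $O(\log n)$ clusters each endpoint joins; the representative test adds only $\tilde{O}(1)$ further probes.

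For Step~\ref{algGlobal5:step4}, assuming w.l.o.g.\ that $u$ has higher degree and belongs to heavy clusters, I take each such cluster $C$, use one adjacency probe to recover $u$'s index $i$ in $N(v)$, and iterate over the at most $n^{2/3}$ indices $j < i$ in the bucket containing $i$, applying $\texttt{InCluster}$ to $N(v)[j]$ at each step; the query is accepted for $C$ iff no earlier match is found. With $O(\log n)$ clusters of $u$ and $\tilde{O}(1)$ per membership test this costs $\tilde{O}(n^{2/3})$. For Step~\ref{algGlobal5:step5}, for each pair $(C_u, C_v)$ of light clusters containing $u$ and $v$, I reconstruct the subsets $S^u_i \ni u$ and $S^v_j \ni v$ by walking $N(s_u)$ and $N(s_v)$ with degree probes, filtering to the appropriate class, and grouping greedily into chunks of size $n^{1/3}$; because the light-cluster centers are drawn only from non-heavy vertices, $\deg(s_u), \deg(s_v) < n^{2/3}$ and reconstruction costs $O(n^{2/3})$ per pair. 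Deciding whether $\{u, v\}$ is the minimum-rank edge in $E(S^u_i, S^v_j)$ uses at most $|S^u_i|\cdot|S^v_j| = n^{2/3}$ adjacency probes; summing over the $O(\log^2 n)$ cluster pairs yields $\tilde{O}(n^{2/3})$ for this step as well. The main obstacle is Step~\ref{step:span5-5d}, since a naive implementation could cost $\Omega(n)$, and one must exploit both the cap on light-cluster center degrees and the subset size of $n^{1/3}$ to keep the whole simulation within the stated budget.
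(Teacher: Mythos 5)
Your proposal is correct and follows essentially the same route as the paper: the paper's local implementation also finds centers by scanning $S^1_c$ (or $N(u)$ for light clusters), walks the relevant bucket of $N(v)$ applying a constant-cost membership test (degree probe, then adjacency to the center or to the reconstructed representative), and for light clusters enumerates $N(s_u)$ and $N(s_v)$ (both of degree $<n^{2/3}$ since second-type centers are not heavy) to recover the size-$n^{1/3}$ subsets and compares ranks over the $n^{1/3}\times n^{1/3}$ pairs. The only difference is cosmetic: you package the membership test as an explicit \texttt{InCluster} subroutine and organize by algorithm step, whereas the paper organizes by a case split on $\deg(u)$, but the probes performed and the $\tilde{O}(n^{2/3})$ accounting are the same.
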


\begin{proof}
On query $\{u, v\}$ we first probe the degree of $u$ and $v$ and return YES if either $u$ or $v$ have degree which is at most $n^{1/3}$.
Otherwise, assume w.l.o.g. that $\deg(u) \geq \deg(v)$. we consider the following cases. 
\paragraph*{First case: $\deg(u) \geq n^{2/3}$.}
In this case we find the centers of $u$ by going over all the centers, $s$, in $S^1_c$ where $c$ is the class of $u$ w.r.t. $n^{2/3}$ and preforming the adjacency probe $\langle u,s \rangle$.
If $v$ belongs to the set of centers of $u$ then we return YES.
Overall, since the number of centers of the first type is $\tO(n^{1/3})$, finding the centers of $u$ requires $\tO(n^{1/3})$ probes and time.

We then find the bucket $b$ of $u$ in $N(v)$ w.r.t. $n^{2/3}$ (see Definition~\ref{def:bucket}) by preforming the adjacency probe $\langle v,u \rangle$.
Let $i$ denote the index of $u$ in $N(v)$.
For each center of $u$, $s$ and for each $j\in b$ such that $j < i$, we check if $N(v)[j]$ belongs to the cluster of $s$. In order to do so we first probe the degree of $y = N(v)[j]$. If $\deg(y) \geq n^{2/3}$ then $v$ is in the cluster of $s$ if and only if it is a neighbour of $s$ and is in class $c$ w.r.t. $n^{2/3}$ where $c$ is such that $s$ belongs to $S^1_c$. 
If $\deg(y) < n^{2/3}$ then we first find the representative of $y$ and if it has a representative we check if it belongs to the cluster of $s$.
Since we have to check this for at most $n^{2/3}$ vertices and for $O(\log n)$ centers, overall the probe and time complexity of preforming this task is $\tO(n^{2/3})$.   

\paragraph*{Second case: $n^{1/3} < \deg(u) < n^{2/3}$ and either $u$ or $v$ have a representative.}
In this case we proceed as in the previous case only that we preform all the checks with respect to the centers of the representative of $u$ (and/or the representative of $v$). Since finding the representative of a vertex requires $O(\log n)$ probes and time the probe and time complexity in this case is $\tO(n^{2/3})$ as well.

\paragraph*{Third case: $n^{1/3} < \deg(u) < n^{2/3}$ and both $u$ and $v$ do not have representatives.}

This corresponds to the case in which both $u$ and $v$ belong to light clusters. In order to find the centers of $u$ we simply go over all vertices, $y$, in $N(u)$ and check if $y$ is in $S^2_c$ where $c$ denotes the class of $u$ w.r.t. $n^{1/3}$. We repeat the same process for $v$. Since checking if a vertex belongs to $S^2_c$ can be done in $O(\log n)$ time (we can generate all the centers in advance and store them in a binary search tree) this task requires $\tO(n^{2/3})$ probes and time.

Finally, for each pair of centers $s_u$ and $s_v$ of $u$ and $v$, respectively, we go over all the neighbours of $s_u$ and $s_v$ and determine for each one, according to its degree, whether it belongs to the cluster of $s_u$ and $s_v$, respectively.
We then find the subsets that $u$ and $v$ belong to as defined in Steps~\ref{step:prev} and~\ref{step:prev2} and return YES if and only of $\{u, v\}$ is the edge of minimum rank that connects these subsets.

The above three cases cover all possible scenarios which implies that the time (and probe) complexity of the local implementation of Algorithm~\ref{algGlobal:spanner5} is $\tO(n^{2/3})$ as claimed.
\end{proof}

}

\ifnum\confver=0
\localimp5
\input{figures/fig5rep.tex}
\fi

\section{LCA for constructing \texorpdfstring{$O(k^2)$}{O(k\^2)}-spanners}\label{sec:spanner}

In this section, we present our LCA for constructing $O(k^2)$-spanners. 

\subsection{The algorithm that works under a promise}\label{sec:promise}
We begin by describing a global algorithm for constructing an $O(k^2)$-spanner which works under the following promise on the input graph $G = (V, E)$.
Let $L \eqdef cn^{1/3} \log n$, where $c$ is a constant that will be determined later.
For every $v \in V$, let $i_v \eqdef \min_r \{|\Gamma_r(v)| \geq L\}$. 
We are promised that $\max_{v\in V} \{i_v\} \leq k$. In words, we assume that the $k$-hop neighborhood of every vertex in $G$ contains at least $L$ vertices. 

In addition, we assume without loss of generality that $k = O(\log n)$ as already for $k=\log n$ our construction yields a spanner with $\tO(n)$ edges on expectation.

Our algorithm builds on the partition of $V$ which is described next.

\subsection{The Underlying Partition}\label{sub.par}

\subparagraph*{Centers.}
Pick a set \(S \subset V\) by independently including each vertex \(v\) in $S$ with probability $n^{-1/3}\log n$, so that \(|S| = \Theta(n^{2/3}\log n)\) w.h.p.
We shall refer to the vertices in $S$ as {\em centers}.  
For each vertex $v \in V$, its {\em center}, denoted by $c(v)$, is the center which is closest to $v$ amongst all centers (break ties between centers according to the id of the center).

\subparagraph*{Voronoi cells.}
The {\em Voronoi cell} of a vertex $v$, denoted by $\Vor(v)$, is the set of all vertices $u$ for which $c(u) = c(v)$.
Additionally, we assign to each cell a random rank, so that there is a uniformly random total order on the cells;
note carefully that the rank of a cell thus differs from the rank of its center (which is given by its identifier, which is not assigned randomly).
We remark that we can determine the rank of the cell from the shared randomness and the cell's identifier, for which we simply use the identifier of its center. 
\subsubsection{Clusters}\label{sec:clusters}
The Voronoi cells are partitioned into clusters which are classified into a couple of categories as described next.

\ifnum\confver=0
\input{figures/figheavy.tex}
\fi

\subparagraph*{Singleton Clusters.}
For each Voronoi cell, consider the BFS tree spanning it, which is rooted at the respective center. 
For every $v\in V$, let $p(v)$ denote the {\em parent} of $v$ in this BFS tree. If $v$ is a center then $p(v) = v$.
For every $v\in V\setminus S$, let $T(v)$ denote the subtree of $v$ in the above-mentioned BFS tree when we remove the edge $\{v, p(v)\}$;
for $v\in S$, $T(v)$ is simply the entire tree.
Now consider a Voronoi cell.
If the cell contains at most $L$ vertices, then the {\em cluster} of all the vertices in the Voronoi cell is the cell itself.
Otherwise, there are two cases. 
If $T(v)$ contains more than $L$ vertices, then we say that $v$ is {\em heavy} and define the cluster of $v$ to be the singleton $\{v\}$.
Otherwise, we say that $v$ is {\em light} and its cluster is defined as follows.

\ifnum\confver=0
\input{figures/figsingle.tex}
\fi

\subparagraph*{Non-singleton clusters.}
Observe that if $v$ is light then it has a unique ancestor $u$ (including $v$) such that $u$ is not heavy and $p(u)$ is heavy. 
We define the cluster of $v$ to consist of $T(u)$ and possibly additional subtrees, $T(u')$, where $u'$ is a also a child of $p(u)$ (in $T(p(u)$), as described next (for illustration, see Figure~\ref{fig:light-v}).

We begin with some definitions and notations.
In order to determine the cluster of $u$ (which is also the cluster of $v$) consider transforming the heavy vertex $r = p(u)$ into a binary tree which we call {\em the auxiliary tree of $r$}, $B_r$, as follows.
$B_r$ is rooted at $r$ and has $i$ complete layers where $i$ is such that $2^{i} < \deg(r)$ and $2^{i+1} \geq \deg(r)$. These layers consist of {\em auxiliary vertices}, namely they do not correspond to vertices in $G$. We then add another layer to $B_r$ consisting of the neighbors of $r$, sorted from left to right according to their index in $N(r)$.
Note that except from the root and the vertices at the last layer of $B_r$, all vertices in $B_r$ are auxiliary vertices. This complete the definition of $B_r$ (see Figure~\ref{fig:singleton} for illustration).
For each vertex $x \in B_r$ we define $B_r(x)$ to be the subtree of $B_r$ rooted at $x$. 
We define $S(x) \eqdef B_r(x) \cap N(r)$, namely $S(x)$ is the set of vertices of $N(r)$ which are in the subtree of $B_r$ rooted at $x$. 
The descendants of $x$, denoted by the set $D(x)$, are defined to be the union of the vertices in $T(y)$ for every $y \in S(x)$, namely $D(x) \eqdef \bigcup_{y\in S(x)} T(y)$. 
The weight of $x$ is defined to be the number of vertices in $D(x)$, namely, $w(x) \eqdef |D(x)|$.

We are now ready to define the cluster of $u$.
Let $z(u)$ be the unique ancestor of $u$ in $B_r$ (including $r$), $z$,  for which $w(z) \leq L$ and $w(p(z)) > L$ (where $p(z)$ denotes the parent of $z$ in $B_r$).
The cluster of $u$ (and $v$) is defined to be the set $D(z)$. This completes the description of how the Voronoi cell is partitioned into clusters.

\subparagraph*{Special vertices.}
In order to bound the number of clusters (see Section~\ref{sec:sparsity}) we shall use the following definitions.

\begin{definition}[Special vertex]\label{def:special}
We say that a vertex $u$ is {\em special} if $|T(u)| > L$ and for every child of $u$ in $T(u)$, $t$, it holds that $|T(t)| \leq L$.
\end{definition}

Analogously we define special auxiliary vertex as follows. 
\begin{definition}[Special auxiliary vertex]\label{def:auxspec}
We say that an auxiliary vertex $y$ is a {\em special auxiliary vertex} if either of the following conditions hold: 
\begin{compactenum}
\item\label{auxitem1} $y$ is a parent of a (non auxiliary) vertex $v$ which is heavy. In this case we say that $y$ is a type (a) special vertex.
\item\label{auxitem2} $w(y) > L$ and for every child of $y$, $t$, it holds that $w(t) \leq L$. In this case we say that $y$ is a type (b) special vertex.
\end{compactenum}
\end{definition}
See Figure~\ref{fig:special_cluster} for illustration of constructing cluster from \emph{special auxiliary vertex}.
\newline
\medskip
For a cluster $C$, let $c(C)$ denote the center of the vertices in $C$ (all the vertices in the cluster have the same center).
Let $\Vor(C)$ denote the Voronoi cell of the vertices in $C$.

\medskip\noindent 
This describes a partition of $V$ into Voronoi cells, and a refinement of this partition into clusters.

\ifnum\confver=0
\input{figures/figspecial.tex}
\fi

\subsection{The Edge Set}\label{subsec.edge}
Our spanner, $G' = (V, E')$, initially contains, for each Voronoi cell, $\Vor$, the edges of the BFS tree that spans $\Vor$, i.e., the BFS tree rooted at the center of $\Vor$ spanning the subgraph induced by $\Vor$.
Clearly, the spanner spans the subgraph induced on every Voronoi cell.
Next, we describe which edges we add to $E'$ in order to connect adjacent clusters of different Voronoi cells.
\subsubsection*{Marked Clusters and Clusters-of-Clusters}
Each center is {\em marked} independently with probability $p \eqdef 1/n^{1/3}$. 
If a center is marked, then we say that its Voronoi cell is marked and all the clusters in this cell are marked as well.

\subparagraph*{Cluster-of-clusters.}
For every marked cluster, $C$, define the {\em cluster-of-clusters} of $C$, denoted by $\calC(C)$, to be the set of clusters which consists of $C$ and all the clusters which are adjacent to $C$. 
Let $B$ be a non-marked cluster which is adjacent to at least one marked cluster.
Let $Y$ denote the set of all edges such that one endpoint is in $B$ and the other endpoint belongs to a marked cluster. 
The cluster $B$ is {\em engaged} with the marked cluster $C$ which is adjacent to $B$ and for which the edge of minimum rank in $Y$ has its other endpoint in $C$.

\subsubsection*{The Edges between Clusters}
By saying that we {\em connect} two adjacent subsets of vertices $A$ and $B$, we mean that we add the minimum ranked edge in $E(A, B)$ to $E'$.
For a cluster $A$, define its {\em adjacent centers} $\Cen(\partial A)\eqdef \{c(v)\,|\,u\in A \wedge \{u,v\}\in E \}\setminus \{c(A)\}$, i.e., the set of centers of Voronoi cells that are adjacent to $A$.
This definition explicitly excludes $c(A)$, as there is no need to connect $A$ to its own Voronoi cell.

We next describe how we connect the clusters. The high-level idea is to make sure that for every adjacent clusters $A$ and $B$ we connect $A$ with the cluster engaging $B$ (perhaps not directly) and vise versa.
For clusters which are not adjacent to any marked cluster and hence not engaged with any cluster we make sure to keep them connected to all adjacent Voronoi cells. Formally:

\begin{compactenum}
\item We connect every cluster to every adjacent marked cluster.\label{connect_marked}
\item Each cluster $A$ that is not engaged with any marked cluster (i.e., no cell adjacent to $A$ is marked) we connect to each adjacent cell.
\label{connect_no_adjacent}
\item Suppose cluster $A$ is adjacent to cluster $B$, where $B$ is adjacent to a marked cell. 
Denote by $C$ the (unique) marked cluster that $B$ is engaged with. 
We connect $A$ with $B$ if the following conditions hold:
\begin{compactitem}
\item the minimum ranked edge in $E(A, \Vor(B))$ is also in $E(A, B)$
\item $c(B)$ is amongst the $n^{1/k}\log n$  lowest ranked centers in $\Cen(\partial A) \cap \Cen(\partial C)$
\end{compactitem}

\label{connect_indirect}
\end{compactenum}

\ifnum\confver=0
\subsection{Sparsity}\label{sec:sparsity}
\begin{claim}
The number of clusters, denoted by $s$, is at most $|S| + O(n k\log \Delta) /L)$.
\end{claim}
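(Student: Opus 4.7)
The plan is to split the clusters into three disjoint types according to how they arise in the construction of Section~\ref{sec:clusters}, and bound each type separately. First I would establish as a preliminary step that, w.h.p., the BFS depth of every Voronoi cell is at most $k$: the promise that $|\Gamma_k(v)| \ge L$ combined with $|S| = \Theta(n^{2/3}\log n)$ gives, by a union bound, that every $v$ has some center within distance $k$, so $\dist(v, c(v)) \le k$.

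\textbf{Type 1: small cells.} A Voronoi cell with at most $L$ vertices contributes a single cluster (itself). Since the number of cells is at most $|S|$, these contribute at most $|S|$ clusters.

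\textbf{Type 2: singleton clusters.} These correspond to heavy vertices $v$ (i.e.\ $|T(v)| > L$) lying in large cells. I would call $v$ \emph{special} if $|T(v)| > L$ but every child of $v$ in the BFS tree has subtree size $\le L$, and verify that special vertices have pairwise disjoint $T(\cdot)$'s (a standard BFS-tree nesting argument: two ancestrally-related specials would force the deeper one's $T(\cdot)$ to lie inside a light subtree, a contradiction). Hence there are at most $n/L$ special vertices in total. Every heavy vertex has a special descendant, and by the depth bound each special vertex has at most $k$ heavy ancestors, so the number of heavy vertices, and thus singleton clusters, is at most $kn/L$.

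\textbf{Type 3: non-singleton clusters.} Each such cluster is $D(z)$ for a cluster-root $z$ in some auxiliary tree $B_r$ with $w(z)\le L$ and $w(p(z))>L$. For a fixed heavy $r$, the vertices at each level of $B_r$ partition the underlying neighborhood-subtree set, so their weights sum to $w(r)$; hence at most $w(r)/L$ of them are \emph{heavy in $B_r$} (weight $>L$). Since $B_r$ is binary, each such vertex contributes at most two children, and only children with $w\le L$ can serve as cluster-roots, giving at most $2w(r)/L$ cluster-roots per level. Summing over the $\log\Delta$ levels yields at most $2w(r)\log\Delta/L$ cluster-roots in $B_r$. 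Finally, for each cell $\sum_{r \text{ heavy in } \Vor(c)} w(r)$ counts each vertex once per heavy BFS-ancestor, and by the depth bound this is at most $k|\Vor(c)|$; summing over cells, $\sum_r w(r) \le kn$, so the total number of non-singleton clusters is at most $2kn\log\Delta/L$.

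Combining the three bounds gives $s \le |S| + kn/L + 2kn\log\Delta/L \le |S| + 4kn\log\Delta/L$. The main obstacle is the Type~3 count: a naive bound of $w(r)/L$ heavy-auxiliary vertices per $B_r$, followed by summation over heavy $r$, is too weak; the gain comes from observing that the per-level weight sum equals $w(r)$, which gives the extra $\log\Delta$ factor exactly once rather than once per heavy ancestor.
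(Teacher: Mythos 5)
Your proof is correct, and Types~1 and~2 (small cells, singleton clusters) match the paper's argument exactly, including the use of special vertices with disjoint subtrees to bound heavy vertices by $kn/L$. For Type~3, however, your bookkeeping genuinely differs from the paper's. The paper introduces the notion of an \emph{auxiliary special vertex} (of two types: parents of heavy $G$-vertices, and auxiliary vertices $y$ with $w(y)>L$ but all children light), argues that every cluster-root's parent is an ancestor of such a vertex, bounds the clusters assigned to a given $r$ by $2\log\Delta$ times the number of auxiliary special vertices in $B_r$, and then bounds the global count of auxiliary special vertices by handling the two types separately (type~(b) by disjointness of the $D(\cdot)$'s, type~(a) by the heavy-vertex count). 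You instead sidestep the auxiliary-special-vertex machinery entirely: you observe that the weights at each level of $B_r$ partition $w(r)$, so at most $w(r)/L$ auxiliary vertices per level are heavy, hence at most $2w(r)/L$ cluster-roots per level, hence at most $2w(r)\log\Delta/L$ per $B_r$; you then close by showing $\sum_r w(r)\leq kn$ via the BFS-depth bound. Your route is cleaner in that it needs only one disjointness/weight-sum argument (per level of $B_r$) rather than a two-case analysis of auxiliary special vertices, and it even yields a marginally tighter constant ($|S|+kn/L+2kn\log\Delta/L$ versus the paper's $|S|+kn/L+2\log\Delta(kn/L+n/L)$). Both proofs hinge on the same two structural facts — the depth of $B_r$ is $O(\log\Delta)$ and the BFS depth is $O(k)$ — but the way they are combined is different.
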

\begin{proof}
We first observe that, due to the promise on $G$, it follows that for every $v \in V$, the distance between $v$ and  $c(v)$ is at most $k$.
Recall the terminology from Subsection~\ref{sub.par}. 

Consider $v$ which is heavy and therefore its cluster is the singleton $\{v\}$.  
By an inductive argument, it follows that $v$ is an ancestor of a special vertex (see Definition~\ref{def:special}).
Since for every pair of special vertices $u$ and $w$, $T(u)$ and $T(w)$ are vertex disjoint, we obtain that there are at most $n/L$ special vertices. Since for every special vertex, there are at most $k$ ancestors, the total number of heavy vertices is bounded by $nk / L$.

Observe that any cluster either (i) is an entire Voronoi cell
(ii) is a singleton $\{v\}$ where $v$ is heavy (iii) is not a singleton and contains a node $v$ such that $p(v)$ is heavy.
The number of type (i) clusters is bounded by the number of Voronoi cells $|S|$.
We just bounded the number of clusters of type (ii) by $nk /L$. 
Thus it remains to bound the number of type (iii) clusters.

Let $A$ be a type (iii) cluster.
Namely, $A$ is not a singleton and contains a node $v$ such that $p(v)$ is heavy. Let $r = p(v)$. We say that the cluster $A$ is {\em assigned} to $r$. 
Since $r$ is a singleton, some of its children may be singletons and the rest of its children belong to a type (iii) cluster which is assigned to $r$. 

Let $u$ be a child of $r$ which belongs to $A$.
As described in Section~\ref{sec:clusters}, the cluster of $u$ is defined to be $D(z(u))$ where $z(u)$ is an auxiliary vertex and the weight of the parent of $z(u)$ in $B_r$, $p$, is greater than $L$.
By an inductive argument, it follows that $p$ is an ancestor (in $B_r$) of an auxiliary special vertex (see Definition~\ref{def:auxspec}). Since the depth of $B_r$ is bounded by $\log \Delta$ we obtain that the number of
vertices in $B_r$ which are parents (in $B_r$) of vertices, $z$, such that $D(z)$ is a cluster is at most $\log \Delta$ times the number of auxiliary special vertices in $B_r$. 
Since $B_r$ is a binary tree it follows that the number of clusters which are assigned to $r$ is bounded by 
$2\log \Delta$ times the number of auxiliary special vertices in $B_r$. 

Let $x$ and $y$ be auxiliary special vertices of type (b). It follows that $D(x)$ and $D(y)$ are disjoint. Thus the number of auxiliary special vertices of type (b) is bounded by $n/L$. 
Therefore the total number of auxiliary special vertices is bounded by the number of heavy vertices, which is at most $nk/L$,  plus $n/L$.

We conclude that the total number of clusters is bounded by $|S| + O((n k\log \Delta) /L)$, as desired.
\end{proof}

\begin{claim}
The number of edges in $E'$ is $O(n^{1+1/k}\cdot k^2 \log^3n)$ with high probability.
\end{claim}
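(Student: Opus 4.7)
The plan is to partition the edges of $E'$ according to which step of Subsection~\ref{subsec.edge} introduces them, and bound each contribution in expectation.

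\emph{BFS trees within Voronoi cells.} These contribute at most $n$ edges deterministically, since the BFS spanning trees are pairwise vertex-disjoint and together span $V$.

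\emph{Rule~\ref{connect_marked}.} For each unordered pair of adjacent clusters $\{A,B\}$, the rule inserts at most one edge, and only when at least one of $\Vor(A), \Vor(B)$ is marked, which happens with probability at most $2p$ by a union bound. Summing over at most $\binom{s}{2}$ adjacent pairs, the expected contribution is $O(ps^2)$. Combining the previous claim's bound $s = \widetilde{O}(n^{2/3})$ (valid under the standing assumption $k = O(\log n)$ together with $L = \Theta(n^{1/3}\log n)$) with $p = 1/n^{1/3}$ yields $\widetilde{O}(n)$.

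\emph{Rule~\ref{connect_no_adjacent}.} For a cluster $A$ with $d := |\Cen(\partial A)|$ adjacent cells, the rule fires only when every adjacent cell is unmarked, which has probability $(1-p)^d$, and contributes exactly $d$ edges. The expected contribution of $A$ is therefore $d(1-p)^d \le d\,e^{-dp} \le 1/(ep)$, and summing over $s$ clusters gives $O(s/p) = \widetilde{O}(n)$.

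\emph{Rule~\ref{connect_indirect}.} This is the main part. Fix a cluster $A$ and a marked cluster $C$ that interact in the sense $\Cen(\partial A) \cap \Cen(\partial C) \neq \emptyset$. The lowest-ranked-$n^{1/k}\log n$ condition admits at most $n^{1/k}\log n$ eligible centers $y$ in that intersection, and for each eligible $y$ the minimum-rank-edge condition selects at most one cluster $B \subseteq \Vor(y)$ receiving the edge. Thus the pair $(A,C)$ contributes at most $n^{1/k}\log n$ edges. Since the interaction condition is a structural property of the Voronoi partition and independent of the random marking of centers, the expected number of interacting pairs $(A,C)$ in which $C$ is marked is at most $p \cdot s^2$, using only the trivial upper bound on structurally interacting cluster pairs. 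Hence Rule~\ref{connect_indirect} contributes at most $n^{1/k}\log n \cdot p \cdot s^2 = \widetilde{O}(n^{1+1/k})$ in expectation.

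Combining the four contributions gives $\mathbb{E}[|E'|] = \widetilde{O}(n^{1+1/k})$; the explicit $k^2\log^3 n$ polylogarithmic factor in the statement emerges by tracking the precise polylog and $k$ dependence in $s = |S| + O(nk\log\Delta/L)$ once it is squared and multiplied by $p \cdot n^{1/k}\log n$. The main obstacle is the Rule~\ref{connect_indirect} count, but the saving is structural: the three multiplicative factors $p$, $s^2$, and $n^{1/k}\log n$ combine exactly to produce $\widetilde{O}(n^{1+1/k})$. The $n^{1/k}\log n$-factor is precisely what the top-rank selection rule buys us, so no finer combinatorial bound on interacting cluster pairs (beyond the trivial $s^2$) is needed.
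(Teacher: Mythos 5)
Your proof is correct and follows essentially the same decomposition as the paper's: BFS trees contribute $O(n)$, and each of the three connection rules is bounded in expectation in terms of $s$, $p$, and $n^{1/k}\log n$, yielding the same final quantity $O(n) + O(ps^2) + O(s/p) + O(ps^2 n^{1/k}\log n) = \widetilde{O}(n^{1+1/k})$. The one place where you take a genuinely different (and cleaner) route is Rule~\ref{connect_no_adjacent}: the paper bounds the probability that any cluster has more than $3p^{-1}\ln n$ unmarked adjacent cells, applies a union bound, and handles the low-probability bad event separately, whereas you directly compute the per-cluster expected contribution $d(1-p)^d \le d\,e^{-dp} \le 1/(ep)$ and sum, avoiding the high-probability detour entirely and shaving the extra $\ln n$. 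Both give the same order; yours is more self-contained. Your Rule~\ref{connect_marked} argument (per-pair $\le 2p$ rather than the paper's $s\cdot m$ with $\mathbb{E}[m]=sp$) and your Rule~\ref{connect_indirect} argument (counting interacting pairs $(A,C)$ and noting the independence of the marking from the Voronoi structure, exactly as the paper implicitly does when writing $s\cdot m\cdot n^{1/k}\log n$) are equivalent reformulations of the paper's bounds.
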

\begin{proof}
The number of edges we add due to the BFS trees of the Voronoi cells is at most $|V|-1$.

The number of edges which are taken due to Condition~\ref{connect_marked} is at most $s$ times the number of marked clusters, denoted by $m$.
The expectation of $m$ is exactly $s\cdot p$. 
Since $s = O(n^{2/3} k \log n)$ and $p = 1/n^{1/3}$ we obtain that the expected number of edges which are taken due to Condition~\ref{connect_marked} is bounded by $s^2 p = O(n k^2 \log^2 n)$ and w.h.p. is $O(n^{1+1/k}\cdot k^2 \log^3n)$.

Observe that the probability that cluster $A$ is not adjacent to a marked cell is $(1-p)^{|\Cen(\partial A)|}\leq e^{-p|\Cen(\partial A)|}$.
Hence, if $|\Cen(\partial A)|\geq 3p^{-1}\ln n$, $A$ is w.h.p. adjacent to a marked cell. 
Using a union bound over all clusters, it follows that with probability at least $1-1/n^2$ each cluster $A$ without an adjacent marked cell satisfies that $|\Cen(\partial A)|\leq 3p^{-1}\ln n$; therefore, w.h.p. the number of edges which are taken due to Condition~\ref{connect_no_adjacent} is bounded by $(3s\ln n)/p = O(n k \log n)$.

Let $A$ be a cluster. The number of edges which are adjacent to $A$ and are taken due to Condition~\ref{connect_indirect} is bounded by the total number of marked clusters times $n^{1/k}\log n$. 
Thus, the total number of edges which are taken due to Condition~\ref{connect_indirect} is bounded by $s \cdot m \cdot n^{1/k}\log n$.

To conclude, w.h.p. the total number of edges in $E'$ is $O(n^{1+1/k}\cdot k^2 \log^3n)$, as desired.
\end{proof}

\subsection{Connectivity and Stretch} 

\begin{claim}\label{lem:adj}
$G'$ is connected.
\end{claim}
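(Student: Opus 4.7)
To prove that $G'$ is connected I will first coarsen the picture. Since the initial step of the edge-set construction adds the edges of a BFS tree spanning each Voronoi cell, every Voronoi cell is itself connected in $G'$. Define the auxiliary ``cell-level'' graph $H$ whose vertices are the Voronoi cells and whose edges record the existence of at least one $E'$-edge between two cells. Connectivity of $G'$ then reduces to connectivity of $H$, and since $G$ is connected it suffices to prove: whenever cells $X \neq Y$ are adjacent in $G$, witnessed by an edge $\{u,v\} \in E$ with $u \in A \subseteq X$ and $v \in B \subseteq Y$, the cells $X$ and $Y$ lie in the same connected component of $H$ (which I will abbreviate $X \equiv Y$).

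I distinguish three cases on the status of $A$ and $B$. Case (i): if $A$ or $B$ is marked, rule~\ref{connect_marked} adds a direct $E'$-edge between them, giving $X \equiv Y$. Case (ii): if neither is marked but, without loss of generality, $A$ is not engaged with any marked cluster, rule~\ref{connect_no_adjacent} connects $A$ to some cluster in every adjacent cell, in particular to one in $Y$; combining with the BFS tree on $Y$ yields $X \equiv Y$. Case (iii): both $A$ and $B$ are engaged with marked clusters $C_A$ and $C_B$. This is the main obstacle.

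For case (iii) I use strong induction on $\max(\mathrm{rank}(c(X)), \mathrm{rank}(c(Y)))$, where the rank of a center is its identifier-based rank. Without loss of generality, assume $c(Y)$ has the larger rank. Let $B^* \subseteq Y$ be the cluster containing the far endpoint of the minimum-ranked edge in $E(A, Y)$; since $B^*$ and $B$ lie in $Y$, it suffices to connect $A$ to $B^*$. If $B^*$ is not engaged with any marked cluster, case (ii) applies to $(A, B^*)$ and we are done. Otherwise $B^*$ is engaged with some marked $C_{B^*}$, and $c(Y) = c(B^*) \in \Cen(\partial A) \cap \Cen(\partial C_{B^*})$. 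If $c(Y)$ is among the $n^{1/k}\log n$ lowest-ranked members of this intersection, rule~\ref{connect_indirect} adds an edge directly between $A$ and $B^*$; this already handles the base case of the induction (where no strictly lower-ranked centers are available).

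Otherwise there exist more than $n^{1/k}\log n$ cells $W$ with $c(W) \in \Cen(\partial A) \cap \Cen(\partial C_{B^*})$ and $\mathrm{rank}(c(W)) < \mathrm{rank}(c(Y))$; fix any such $W$. Since $W$ is adjacent in $G$ to the marked cell $\Vor(C_{B^*})$, case (i) gives $W \equiv \Vor(C_{B^*})$, and rule~\ref{connect_marked} applied to $B^*$ and $C_{B^*}$ gives $Y \equiv \Vor(C_{B^*})$. Moreover $c(W) \in \Cen(\partial A)$ means $X$ is adjacent to $W$ in $G$, and $\max(\mathrm{rank}(c(X)), \mathrm{rank}(c(W))) < \mathrm{rank}(c(Y))$, so the inductive hypothesis yields $X \equiv W$. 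Chaining $X \equiv W \equiv \Vor(C_{B^*}) \equiv Y$ completes the argument. The key technical point, and where the main obstacle lies, is that the threshold $n^{1/k}\log n$ in rule~\ref{connect_indirect} is calibrated so that whenever the direct connection from $A$ to $B^*$ fails, a large pool of lower-rank bridge cells must exist, providing the pigeonhole that powers the induction.
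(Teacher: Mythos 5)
Your proof is correct and follows essentially the same route as the paper's: reduce $G'$-connectivity to connectivity of the contracted cell graph, handle the unengaged and marked cases directly, and then run a descending-rank argument through the engaged-marked-cluster structure. The paper phrases that last part as an explicit iteration producing a strictly decreasing sequence $\Vor_1, \Vor_2, \ldots$ of cells (with the cluster $A\subseteq \Vor$ held fixed), whereas you cast it as a strong induction on $\max(\mathrm{rank}(c(X)), \mathrm{rank}(c(Y)))$; these are the same argument. One caveat worth flagging: rule~\ref{connect_indirect} and the descending-rank step are intended to use the \emph{random} ranks assigned to the Voronoi cells (that randomness is what powers Claim~\ref{lem:chain}), not the identifier-based rank of the centers as you state — for Claim~\ref{lem:adj} alone any fixed total order suffices, so your proof is still valid, but the distinction matters in the subsequent stretch analysis.
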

\begin{proof}
Recall that $G'$ contains a spanning tree on every Voronoi cell, hence it suffices to show that we can connect any pair of Voronoi cells by a path between some of their vertices.
Moreover, the facts that $G$ is connected and the Voronoi cells are a partition of $V$ imply that it is sufficient to prove this for any pair of adjacent Voronoi cells.
Accordingly, let $\Vor$ and $\Vor_1$ be two cells such that $E(\Vor, \Vor_1) \neq \emptyset$.

Consider clusters $A\subseteq \Vor$ and $B\subseteq \Vor_1$ such that the edge $e$ of minimum rank in $E(\Vor,\Vor_1)$ is in $E(A,B)$.
If $B$ is not adjacent to a marked cell, then Condition~\ref{connect_no_adjacent} implies that $e$ is selected into $H$.
Thus, we may assume that $B$ is adjacent to a marked cell $\Vor'$ such that there exists a marked cluster $C\subseteq \Vor'$ such that $B$ is engaged with $C$.

If the rank of $\Vor_1$ is minimum in $\Vor(\partial C)\cap \Vor(\partial A)$, then $e$ is selected into $G'$ by Condition~\ref{connect_indirect} and we are done.
Otherwise, observe that $\Vor_1$ is connected to $\Vor'$, as the edge of minimum rank in $E(B,C)$ is selected into $G'$ by Condition~\ref{connect_marked}.
Therefore, it suffices to show that $\Vor$ gets connected to $\Vor'$.
Let $\Vor_2$ be the cell of minimum rank among $\Vor(\partial C)\cap \Vor(\partial A)$.
Let $D\subseteq \Vor_2$ be the cluster satisfying that the edge $e'$ of minimum rank in $E(A,\Vor_2)$ is in $E(A,D)$.
Note that $\Vor_2$ is connected to $\Vor'$ (which we saw to be connected to $\Vor_1$), as there is some cluster $D'\subseteq \Vor(D)$ that is adjacent to $C$ and selects the edge of minimum rank in $E(D',C)$ by Condition~\ref{connect_marked}.

Overall, we see that it is sufficient to show that $\Vor$ gets connected to $\Vor_2$, where $\Vor_2$ has smaller rank than $\Vor_1$.
We now repeat the above reasoning inductively.
In step $i$, we either succeed in establishing connectivity between $\Vor$ and $\Vor_i$, or we determine a cell $\Vor_{i+1}$ that has smaller rank than $\Vor_i$ and is connected to $\Vor_i$.
As any sequence of Voronoi cells of descending ranks must be finite, the induction halts after finitely many steps.
Because the induction invariant is that $\Vor_{i+1}$ is connected to $\Vor_i$, this establishes connectivity between $\Vor$ and $\Vor_1$, completing the proof.
\end{proof} 
See Figure~\ref{fig:connectivity} for illustration of connectivity and stretch.

\ifnum\confver=0
\input{figures/figconnect.tex}
\fi

\begin{claim}\label{lem:chain}
Denote by $G_{\Vor}$ the graph obtained from $G$ by contracting Voronoi cells and by $G'_{\Vor}$ its subgraph obtained when doing the same in $G'$.
If the cells' ranks are uniformly random, w.h.p.\ $G'_{\Vor}$ is a spanner of $G_{\Vor}$ of stretch $O(k)$.
\end{claim}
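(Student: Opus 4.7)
The stretch claim reduces, via the triangle inequality, to showing that for every edge $\{\Vor, \Vor'\}$ of $G_\Vor$, the distance between $\Vor$ and $\Vor'$ in $G'_\Vor$ is $O(k)$ with high probability. A union bound over the at most $n^2$ adjacent pairs of cells then delivers the stretch bound for arbitrary pairs.

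The plan is to reuse the inductive chain from the proof of Claim~\ref{lem:adj}. For fixed adjacent cells $\Vor$ and $\Vor_1$, that proof exhibits a sequence $\Vor_1, \Vor_2, \ldots, \Vor_m$ with strictly decreasing ranks in which each transition $\Vor_i \to \Vor_{i+1}$ uses two hops in $G'_\Vor$, routing through the marked cell that contains the cluster $C_i$ via Condition~\ref{connect_marked}. The chain terminates once the current cell is the minimum-rank element of $\Vor(\partial C_i)\cap \Vor(\partial A_i)$, at which point Condition~\ref{connect_indirect} directly connects $\Vor$ to $\Vor_m$ in $G'_\Vor$ in a single hop (the strict minimum trivially lies in the top $n^{1/k}\log n$ ranks demanded there). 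Hence $d_{G'_\Vor}(\Vor,\Vor_1) \le 2m+1$, and everything reduces to bounding the chain length by $m = O(k)$ with high probability.

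For the chain-length bound, I would condition on every source of randomness except the uniform random cell ranks. At step $i$, the cell $\Vor_{i+1}$ is the minimum-rank element of $T_i \eqdef \Vor(\partial C_i)\cap \Vor(\partial A_i)$. A key observation is that $\Vor_i\in T_i$ (it is adjacent to both $A_i$ and $C_i$), so termination occurs precisely when $\Vor_i$ is itself the rank-$1$ cell of $T_i$. The plan is to show that each non-terminating step forces the rank of $\Vor_{i+1}$ to drop by a factor of roughly $n^{1/k}$ relative to the previous step: since ranks live in $[1,s]$ with $s=\tilde{O}(n^{2/3})\le n$, this already forces termination after $O(\log_{n^{1/k}} n) = O(k)$ iterations. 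A Chernoff-type bound on the number of ``slow'' iterations, followed by a union bound over adjacent pairs $\{\Vor,\Vor'\}$, yields the desired high-probability conclusion.

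The main technical obstacle is that both $A_i$ and $C_i$ depend on previously revealed information (through the minimum-rank-edge definitions of $B_i$ and $A_i$, and through the engagement of $B_i$ with $C_i$), so the sets $T_i$ are not independent random subsets. I expect to resolve this via a principle-of-deferred-decisions argument: reveal the rank of each cell only at the first step in which it enters some $T_i$, so that conditional on the history the new minimum is uniform over the still-unexposed cells of $T_i$; this makes the geometric rank-drop rigorous. An alternative route is to enumerate the polynomially many candidate triples $(A, C, T)$ in advance and apply a union bound over all ``bad'' chains of length $\omega(k)$. Either route is expected to deliver the stretch bound of $O(k)$.
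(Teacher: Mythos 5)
Your high-level plan matches the paper's proof: reuse the inductive chain from Claim~\ref{lem:adj}, apply a deferred-decisions argument to the random cell ranks to obtain a multiplicative $n^{1/k}$ drop in rank per step, conclude the chain has length $O(k)$ since ranks live in $[1,n]$, and union-bound over pairs of cells. The paper indeed reveals ranks lazily inside the inductive chain exactly as you anticipate.

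However, there is a genuine gap in your treatment of the termination condition, and it is precisely the point where the paper's proof does something subtler. You set the chain to terminate only when $\Vor_i$ is the \emph{strict minimum} of $T_i \eqdef \Cen(\partial C_i)\cap\Cen(\partial A)$, and then argue the rank drops by a factor $n^{1/k}$ per non-terminating step. But that geometric drop requires $T_i$ to contain at least $n^{1/k}\log n$ cells whose ranks are still free; nothing guarantees this. When $|T_i|$ is small (say constant), the minimum of the unrevealed cells in $T_i$ is typically only a constant factor below $r_i$, so the strict-minimum chain can take $\Theta(\log n)$ steps, which is $\omega(k)$ for small $k$. The paper avoids this by terminating as soon as $\Vor_i$ lies among the $n^{1/k}\log n$ \emph{lowest-ranked} cells of $T_i$ — which, in the deferred-decisions process where all revealed ranks are at least $r_i$, is equivalent to fewer than $n^{1/k}\log n$ unrevealed cells remaining in $T_i$. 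This looser termination is exactly the criterion in Condition~\ref{connect_indirect}, so it still yields a direct $G'_{\Vor}$-edge from $\Vor$ to $\Vor_i$, and it guarantees that on every non-terminating step there are at least $n^{1/k}\log n$ unrevealed candidates, making the geometric rank drop go through. You noticed that the strict minimum lies in the top $n^{1/k}\log n$, but you used that fact in the wrong direction — you should terminate the chain under the weaker ``top $n^{1/k}\log n$'' event, not the stronger ``strict minimum'' event.

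A secondary issue is the specific revelation rule. The paper reveals, upon querying a rank $r$, \emph{all} cells of rank at least $r$. This ensures that at every step the unrevealed cells' ranks form a uniformly random permutation of $[r_i-1]$, which is what makes the ``$\min \le r_i/n^{1/k}$'' estimate clean. Your proposed rule (``reveal a cell only at the first step in which it enters some $T_i$'') leaves behind unrevealed cells whose ranks need not lie below $r_i$, so the conditional distribution is messier; you would have to argue separately that enough unrevealed cells carry ranks below $r_i$. Adopting the paper's reveal-from-the-top mechanism sidesteps this entirely and is the cleaner choice.
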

\begin{proof}
Recall the proof of Lemma~\ref{lem:adj}.
We established connectivity by an inductive argument, where each step increased the number of traversed Voronoi cells by two.
Hence, it suffices to show that the induction halts after $O(k)$ steps w.h.p.

To see this, observe first that $G_{\Vor}$ is independent of the ranks assigned to Voronoi cells and pick any pair of adjacent cells $\Vor$, $\Vor_1$, i.e., neighbors in $G_{\Vor}$.
We perform the induction again, assigning ranks from high to low only as needed in each step, according to the following process.
In each step, we query the rank of some cells, and given an answer of rank $r$, the ranks of all cells of rank at least $r$ are revealed as well.  
In step~$i$, we begin by querying the rank of $\Vor_i$.
Consider the cluster $D_i \subseteq \Vor_i$ adjacent to $A$ satisfying that the edge with minimum rank in $E(\Vor_i, A)$ is also in $E(D_i, A)$.
We can assume without loss of generality that $D_i$ is engaged with a marked cluster $F_i$ (as otherwise $D_i$ connects to $A$ directly and we can terminate the process).
If the rank of anyone of $n^{1/k}\log n$ lowest ranked cells which are adjacent to both $F_i$ and $A$ was already revealed, then the process terminates. 
Otherwise, we query the rank of all the cells which are adjacent to both $F_i$ and $A$ whose rank is still unrevealed.
We set the cell of the queried cluster that has minimum rank to be $\Vor_{i+1}$ and we continue to the next step.

We claim that, in each step $i$, either the process terminates, or the rank of $\Vor_{i+1}$ is at most $1/n^{1/k}$ of the rank of $\Vor_i$ with high probability.
To verify this, observe that in the beginning of step $i$, any cell center whose rank was not revealed so far has rank which is uniformly distributed in $[r_i-1]$, where $r_i$ is the rank of $\Vor_i$.\footnote{In step $1$, we first query $\Vor_1$ and then observe that this statement holds.}
Since there are at least $n^{1/k} \log n$ such cells, we obtain that with high probability the rank of the minimum ranked cell is at most $r_i/n^{1/k}$, as desired.
Hence, with high probability the process terminates after $O(k)$ steps as $r_1$ is bounded by the number of Voronoi cells, which itself is trivially bounded by $n$.
By the union bound over all pairs of cells $\Vor$ and $\Vor_1$, we get the desired guarantee. 
\end{proof}

\begin{claim}\label{cor:chain}
W.h.p., $G'$ is a spanner of $G$ of stretch $O(k^2)$.
\end{claim}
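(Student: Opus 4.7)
The plan is to lift the $O(k)$-stretch guarantee for the contracted graph $G'_{\Vor}$, proved in Claim~\ref{lem:chain}, to an $O(k^2)$-stretch guarantee for $G'$ itself, paying a factor of $O(k)$ for routing inside Voronoi cells. Fix an arbitrary edge $\{u,v\} \in E$; it suffices to exhibit a path of length $O(k^2)$ from $u$ to $v$ in $G'$, since concatenating such paths along any path in $G$ yields the stretch bound. I would split into two cases according to whether $u$ and $v$ share a Voronoi cell.

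If $u$ and $v$ belong to the same cell $\Vor$, then by the promise $\max_{w\in V}i_w \le k$ each vertex of $\Vor$ is within distance $k$ of $c(\Vor)$ in $G$, and hence the BFS tree of $\Vor$ (which is entirely contained in $G'$ by construction, see Subsection~\ref{subsec.edge}) has depth at most $k$. Routing from $u$ up to $c(\Vor)$ and then down to $v$ gives a path in $G'$ of length at most $2k = O(k^2)$, as desired.

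The interesting case is when $u \in \Vor$ and $v \in \Vor_1$ with $\Vor \neq \Vor_1$. Then $\Vor$ and $\Vor_1$ are adjacent in $G_{\Vor}$, so by Claim~\ref{lem:chain} there is, w.h.p., a path $\Vor = \Vor^{(0)}, \Vor^{(1)}, \dots, \Vor^{(\ell)} = \Vor_1$ in $G'_{\Vor}$ of length $\ell = O(k)$. Each edge of this path comes from some edge $\{x_i, y_{i+1}\} \in E'$ with $x_i \in \Vor^{(i)}$ and $y_{i+1} \in \Vor^{(i+1)}$. I would then lift this to a walk in $G'$ by, in each cell $\Vor^{(i)}$, routing from the entry vertex $y_i$ up to $c(\Vor^{(i)})$ and back down to the exit vertex $x_i$ along the BFS tree; this uses at most $2k$ tree edges per cell. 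Taking the union with the $\ell$ crossing edges gives a $u$-to-$v$ walk in $G'$ of length at most $(\ell+1)\cdot 2k + \ell = O(k^2)$.

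Applying Claim~\ref{lem:chain} with a union bound over the high-probability events needed for the $O(k)$ stretch of $G'_{\Vor}$ to hold simultaneously for all adjacent cell pairs, we obtain that w.h.p.\ every edge of $G$ is spanned in $G'$ with stretch $O(k^2)$, which establishes the claim. The main (essentially the only) subtlety in the plan is verifying that the lifting of the path in $G'_{\Vor}$ is well-defined: this relies on the fact that the spanner always contains the full BFS tree of each Voronoi cell, so the within-cell detours are free to use. The remaining ingredients (depth bound from the promise, and the $O(k)$ stretch in the contracted graph) are already in hand from earlier in this section.
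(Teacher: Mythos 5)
Your proposal is correct and takes essentially the same approach as the paper: handle same-cell edges via the $O(k)$-depth BFS trees, and for cross-cell edges invoke Claim~\ref{lem:chain} to get an $O(k)$-hop path in $G'_{\Vor}$ and lift it to $G'$ by detouring through each cell's center, paying $O(k)$ per cell. The only cosmetic difference is that the paper states the depth bound as a w.h.p.\ event (it depends on the random choice of centers, not solely on the promise), whereas you phrase it as a deterministic consequence of the promise; this does not affect the substance, since you correctly state the final claim as holding w.h.p.
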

\begin{proof}
Due to the promise on $G$, w.h.p.\ the spanning trees on Voronoi cells have depth $O(k)$.
Hence, the claim holds for any edge within a Voronoi cell.
Moreover, for an edge connecting different Voronoi cells, by Lemma~\ref{lem:chain}, w.h.p.\ there is a path of length $O(k)$ in $G'_{\Vor}$ connecting the respective cells.
Navigating with at most $O(k)$ hops in each traversed cell, we obtain a suitable path of length $O(k^2)$ in $G'$. 
\end{proof}

\subsection{The algorithm for general graphs}\label{sec:general}
We use a combination of the algorithm in Section~\ref{sec:spanner} with the algorithm by Baswana and Sen~\cite{BS07} which has the following guarantees.

\begin{theorem}[\cite{BS07}]\label{thm:BS}
There exists a randomized $k$-round distributed algorithm for computing a $(2k-1)$-spanner $G' = (V,E')$ with $O(kn^{1+1/k})$ edges for an unweighted input graph $G = (V, E)$. More specifically, for every $\{u, v\} \in E'$, at the end of the $k$-round procedure, at least one of the endpoints $u$ or $v$ (but
not necessarily both) has chosen to include $\{u, v\}$ in $E'$. 
\end{theorem}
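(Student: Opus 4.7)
The plan is to present the classical Baswana--Sen construction, which is a clustering-based algorithm that runs in $k$ phases and naturally fits the $k$-round distributed framework. Maintain a partition $\mathcal{P}_i$ of (a subset of) $V$ into \emph{clusters}, where initially $\mathcal{P}_0=\{\{v\}:v\in V\}$ consists of $n$ singletons. Each cluster will always be spanned by a tree in $E'$ of radius at most $i$ rooted at a distinguished \emph{center}, and we think of a vertex as ``alive'' in phase $i$ if it still belongs to some cluster in $\mathcal{P}_{i-1}$.

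Phase $i$ (for $i=1,\ldots,k-1$) proceeds as follows. Sample each cluster $C\in \mathcal{P}_{i-1}$ independently with probability $n^{-1/k}$; the sampled clusters survive into $\mathcal{P}_i$. Every alive vertex $v$ then inspects the multiset of edges to its neighboring clusters in $\mathcal{P}_{i-1}$ (in the LOCAL model this requires knowing the cluster-id of each neighbor, which, by induction, is available in one round since the tree spanning each cluster has depth at most $i-1$). If none of $v$'s neighboring clusters is sampled, $v$ adds to $E'$ one edge per neighboring cluster (chosen, say, by minimum rank) and then ``retires.'' Otherwise, let $e^\star$ be the minimum-weight edge from $v$ to any sampled cluster $C^\star$; $v$ adds $e^\star$ (thereby joining $C^\star$ and extending its tree by one hop), and, in addition, for every other neighboring cluster $C$ containing a neighbor of $v$ via an edge strictly lighter than $e^\star$, $v$ adds one such lightest edge. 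In the final phase $k$, each alive vertex simply adds one edge per neighboring cluster and retires. Crucially, in every phase each added edge is introduced by a single endpoint, establishing the ``one endpoint decides'' property.

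The analysis has three parts. \textbf{Edge count:} in each phase, for any alive vertex $v$, the number of edges it adds beyond $e^\star$ is bounded by the number of distinct neighboring clusters whose minimum-weight connecting edge is lighter than $e^\star$; a classical coupon-collector / geometric argument shows this is $O(n^{1/k})$ in expectation, since the edges to neighboring clusters considered in increasing weight order each hit a sampled cluster with probability $n^{-1/k}$ and processing stops there. Summed over alive vertices this yields $O(n^{1+1/k})$ per phase and $O(kn^{1+1/k})$ overall; retired vertices in phase $k-1$ and the last phase contribute an additional $O(n^{1+1/k})$ by the same bound on the number of surviving clusters. \textbf{Stretch:} for any edge $\{u,v\}\notin E'$, track the phase in which $u$ (say) retired and argue inductively that $u$ and $v$ get connected in $E'$ by a path of length at most $2k-1$. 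Concretely, if $u$ retired in phase $i$, either some neighboring cluster of $u$ containing $v$ (or an ancestor in the cluster of $v$) was handled directly, giving a short detour through $u$'s retirement edges, or $v$ lies in a sampled cluster whose tree, together with the edges $u$ added, provides the required short path; a careful accounting of cluster radii ($\le i$ in phase $i$) gives the bound $2k-1$. \textbf{Rounds:} each phase requires $O(1)$ communication rounds to disseminate sampling decisions and updated cluster-ids one hop, for a total of $k$ rounds.

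The main obstacle will be the stretch analysis: one has to argue carefully that the choice of adding the lightest edge to each lighter-than-$e^\star$ cluster suffices to connect $u$ to $v$ within $2k-1$ hops, handling both the case where $v$'s cluster was sampled (so $v$ is reachable through an extended tree of depth $\le i$) and the case where $v$'s cluster was not sampled (where one then leverages that $v$ itself retires and adds an edge to $u$'s cluster, or to an intermediate cluster). I would formalize this via an induction on the phase in which the latest endpoint of $\{u,v\}$ retires, using the invariant that, at the moment a vertex retires, it is connected in $E'$ to every neighboring cluster by a path of length at most the radius of the cluster plus one.
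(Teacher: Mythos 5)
The paper invokes Theorem~\ref{thm:BS} as a black-box citation to Baswana and Sen~\cite{BS07} and does not include a proof of it. Your sketch is a faithful rendering of the Baswana--Sen clustering algorithm and its standard edge-count and stretch analysis, i.e., essentially the proof the citation points to; the ``one endpoint decides'' property the theorem emphasizes is indeed immediate in your presentation, since each added edge is introduced by exactly one alive vertex at the moment it retires or joins a sampled cluster, and dissemination of sampling decisions and cluster identifiers costs one round per phase. One minor remark: since the input graph here is unweighted, the step in which a joining vertex additionally connects to every neighboring cluster whose connecting edge is ``strictly lighter'' than $e^\star$ is vacuous (all edges have equal weight), or, if you treat edge ranks as proxy weights, it is a legitimate but unnecessary elaboration; the simpler rule in which a vertex that joins a sampled cluster adds only the single joining edge, while a retiring vertex adds one edge to each neighboring cluster, already yields the $O(kn^{1+1/k})$ expected edge bound (via the geometric argument you describe, together with the $\widetilde{O}(n^{1/k})$ bound on surviving clusters in the final phase) and the $2k-1$ stretch (via your induction on the phase in which the later endpoint retires, using that a cluster's spanning tree has radius at most $i-1$ at the start of phase $i$, so the added edge closes a path of length at most $1+2(k-1)$).
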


We call a vertex $v$ {\em remote} if the $k$-hop neighborhood of $v$ contain less than $L$ vertices. 
We denote by $\bar{R}\eqdef V\setminus R$ the set of vertices which are not remote.

\subparagraph*{First Step.} Run the algorithm from Section~\ref{sec:spanner} on the subgraph induced by $\bar{R}$, i.e., $\{u,v\}\in E$ with $u,v\in \bar{R}$ is added to $E'$ if and only if the algorithm outputs the edge.

\subparagraph*{Second Step.} Run the algorithm of Baswana and Sen~\cite{BS07} on the subgraph $H = (V, \{\{u,v\}\in E \,|\, u\in R \text{ or } v \in R \})$, i.e., $\{u,v\}\in E$ with $u\in R$ or $v \in R$ is added to $E'$ if and only if the algorithm outputs the edge.\footnote{The algorithm is described for connected graphs; we simply apply it to each connected component of $H$.}

\subsection{Stretch Factor}
Consider any edge $e=\{u,v\}\in E\setminus E'$ we removed.
If both $u$ and $v$ are in $\bar{R}$, then $e$ was removed by the Algorithm from Section~\ref{sec:spanner}, which was applied to the subgraph induced by $\bar{R}$.
Applying Claim~\ref{lem:chain} to the connected component of $e$, we get that w.h.p.\ there is a path of length $O(k^2)$ from $u$ to $v$ in $G'$.
If $u$ or $v$ are in $R$, by Theorem~\ref{thm:BS} there is a path of length $O(k)$ from $u$ to $v$ in $G'$.
\begin{corollary}
The above algorithm guarantees stretch $O(k^2)$ w.h.p.\ and satisfies that the expected number of edges in $E'$ is $O(n^{1+1/k}\cdot k^2 \log^3n)$
\end{corollary}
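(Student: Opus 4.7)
The plan is to derive the corollary by combining the analyses of the two steps of the hybrid algorithm. The stretch bound is essentially laid out in the paragraph just preceding the corollary, so what is left is to formalize the case split: any edge $\{u,v\}\in E\setminus E'$ with both endpoints in $\bar{R}$ was removed by the first step, so applying Claim~\ref{lem:chain} (together with the BFS trees spanning each Voronoi cell) to the subgraph induced by $\bar{R}$ yields a $u$--$v$ path in $G'$ of length $O(k^2)$ with high probability; any edge incident to a vertex of $R$ was removed by the second step, so Theorem~\ref{thm:BS} supplies a path of length at most $2k-1=O(k)$ inside the Baswana--Sen spanner, which is contained in $E'$. A union bound over all (at most $|E|$) such edges gives stretch $O(k^2)$ w.h.p.

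For the expected edge count I would simply sum the two contributions. The first step outputs at most $O(n^{1+1/k}\cdot k^2\log^3 n)$ edges in expectation, by the sparsity analysis of Section~\ref{sec:sparsity} applied to the induced subgraph on $\bar{R}$. The second step outputs $O(k n^{1+1/k})$ edges by Theorem~\ref{thm:BS}, applied componentwise to $H$. The first bound dominates, so the total expected number of edges in $E'$ is $O(n^{1+1/k}\cdot k^2\log^3 n)$, matching the statement.

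The main obstacle I expect is justifying that the guarantees of Section~\ref{sec:spanner} legitimately transfer to the induced subgraph on $\bar{R}$: the construction there was stated under the promise that every vertex has a $k$-hop neighborhood of size at least $L$ in the input graph, whereas here the input to the first step is an induced subgraph in which some $k$-hop neighbors may have been removed along with the vertices of $R$. I would resolve this either by reverifying the sparsity and stretch claims directly in the subgraph setting (observing that the bounds used in Section~\ref{sec:sparsity} and Claim~\ref{lem:chain} only relied on the promise through the bound $i_v\leq k$ inside the graph the algorithm is actually run on), or by slightly tightening the definition of $R$ to ensure that the induced subgraph on $\bar{R}$ automatically satisfies the promise with $L$ replaced by an appropriate constant factor. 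In either case the asymptotic bounds are preserved and the corollary follows.
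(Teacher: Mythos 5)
Your decomposition is exactly what the paper intends: the corollary has no explicit proof in the paper, but is just a combination of the ``Stretch Factor'' paragraph (case split on whether both endpoints lie in $\bar{R}$, invoking Claim~\ref{cor:chain} in one case and Theorem~\ref{thm:BS} in the other) with the sparsity claim of Section~\ref{sec:sparsity} plus the $O(kn^{1+1/k})$ edge bound from Baswana--Sen, the former dominating. So your proof is correct and follows the paper's route; the extra union bound over edges is harmless (Claim~\ref{cor:chain} already gives the guarantee simultaneously for all edges, and Baswana--Sen's stretch guarantee is deterministic).

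The issue you raise in your last paragraph is the interesting part, and you are right that it is a genuine subtlety the paper does not address: deleting $R$ can shrink the $k$-hop neighborhoods of the surviving vertices (a star with $L$ leaves is a clean counterexample for $k=1$), so the promise of Section~\ref{sec:promise} does not automatically carry over to $G[\bar{R}]$, and both the sparsity claim and Claim~\ref{cor:chain} use that promise (to bound the BFS depth in Voronoi cells by $O(k)$, and to bound the number of clusters). Your proposed resolution~(a) as phrased is circular --- the promise \emph{is} the statement $i_v\leq k$, and you do not get to assume it in the graph the algorithm is run on --- so you would have to actually reverify or repair it, e.g.\ along the lines of your resolution~(b), by iterating the removal of remote vertices until a fixed point or by defining remoteness with a slightly larger neighborhood radius so that $\bar{R}$ is closed under the promise. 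This is a real gap, but it is a gap in the paper's own reasoning, not an error you introduced; your write-up is the correct synthesis of what the paper asserts, and you have correctly flagged the one point that needs shoring up.
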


\subsection{The local implementation}\label{sec:localC}
In this section we prove the following theorem.

\begin{theorem}\label{thm:k2-main}
There exists an LCA that given access to an $n$-vertex simple undirected graph $G$, with high probability constructs a $O(k^2)$-spanners with $\tO(n^{1+1/k})$ edges in expectation. The probe complexity and time complexity are $O(n^{2/3}\Delta^2)$. Moreover, the algorithm access the graph only by $\allnbr$ queries (and performs $O(n^{2/3}\Delta)$ such queries).
\end{theorem}

\begin{algorithm}[h]
\caption{LCA for constructing $O(k^2)$-spanners}\label{alg:main}
\textbf{Input:} $\{u, v\} \in E$\\
\textbf{Output:} whether $\{u, v\}$ is in $E'$ or not.
\begin{enumerate} 
\item If $u$ or $v$ are in $R$, simulate the algorithm of Baswana and Sen at $u$ and $v$ when running it on the connected component of $u$ and $v$ in the subgraph $H$ (see Section~\ref{sec:general}). 
Return {\bf YES} if either $u$ or $v$ has chosen to include $\{u, v\}$ and {\bf NO} otherwise.\label{s1}
\item \label{s2} Otherwise, $u,v\in \bar{R}$ and we proceed according to Section~\ref{sec:promise}, where all nodes in $R$ are ignored:
\begin{enumerate}
\item If $\Vor(u)  = \Vor(v)$, return {\bf YES} if $\{u, v\}$ is in the BFS tree of $\Vor(u)$ and {\bf NO} otherwise.  

\item Otherwise, let $Q$ and $W$ denote the clusters of $u$ and $v$, respectively.
Return {\bf YES} if at least one of the following conditions hold for $A = Q$ and $B = W$, or symmetrically, for $A = W$ and $B = Q$, and {\bf NO} otherwise. 

\begin{enumerate}
\item \label{con.0} $A$ is a marked cluster and $\{u,v\}$ has minimum rank amongst the edges in $E(A,B)$.

\item \label{con.1} $A$ is not engaged with any marked cluster. Namely, all the clusters which are adjacent to $A$ are not marked. 
In this case, we take $\{u, v\}$ if it has minimum rank amongst the edges in $E(A, \Vor(B))$.

\item \label{con.2} There exists a marked cluster $C$ such that $B$ is engaged with $C$, and the following holds:
\begin{itemize}
\item $\{u,v\}$ has minimum rank amongst the edges in $E(A, \Vor(B))$.  
\item The cell $\Vor(B)$ is amongst the $n^{1/k}\log n$ minimum ranked cells in $\Cen(\partial A) \cap \Cen(\partial C)$
\end{itemize}
\end{enumerate}
\end{enumerate}
\end{enumerate} 
\end{algorithm}

\begin{proof}
The local implementation of the algorithm which is described in the previous section is listed in Algorithm~\ref{alg:main}.
The correctness of the algorithm follows from the previous sections. We shall prove that its complexity is as claimed.

\subparagraph*{The local implementation for remote vertices.}
For Step~\ref{s1}, we need to determine for both $u$ and $v$ if they are remote. 
Recall that a vertex $u$ is remote if its $k$-hop neighborhood contains less than $L$ vertices.
Therefore, we can decide for any vertex $u$ whether it is in $R$ with at most $L$ $\allnbr$ probes. Thus the probe and time complexity is $O(L\Delta)$.
If either $u$ or $v$ are remote then we need to determine for each vertex in their $k$-hop neighborhood whether it is remote or not. 
If either $u$ or $v$ are remote then the $k$-hop neighborhood of each of them contain at most $L \Delta$ vertices. This follows from the fact that the size of the $k$-hop neighborhood of $v$ is at most $\Delta$ factor bigger from the $k$-hop neighborhood of $u$ and vice versa.
Thus, we need to call $\allnbr$ at most $L^2\Delta$ times for this step.
Hence, we obtain that the probe and time complexity of this step is $O(L^2 \Delta^2)$, in total.

\medskip
If $u,v\in \bar{R}$, namely, when both $u$ and $v$ are non-remote, the algorithm proceeds as in Section~\ref{sec:promise}. We next describe the local implementation of the algorithm for this case. 

\subparagraph*{Finding the center and reconstructing the BFS tree.}
We first analyse the probe and time complexity of determining the center of a vertex.
Given a vertex $v$ we perform a BFS from $v$ layer by layer and stop at the first layer in which we find a center or after exploring at least $L$ vertices.
Let $i$ denote the layer in which the execution of the BFS stops. It follows that up to layer $i-1$ we explored strictly less than $L$ vertices. 
Thus this step can be implemented by $O(L)$ calls to $\allnbr$.
In particular, the probe and time complexity of finding the center is $O(L\Delta)$.

We observe that at the same cost we also determine the path from $c(v)$ to $v$ in the BFS tree rooted at $c(v)$ as follows. The parent of $v$ in the tree is the neighbour of $v$ that has minimum id amongst all neighbour of $v$ that are closer than $v$ to $c(v)$. Similarly, we can determine the parent of the parent of $v$ and so on until we reach $c(v)$.

\subparagraph*{Determining if a vertex is heavy.}
In order to reconstruct the clusters we need to be able to determine if a vertex is heavy or not. Recall that a vertex $v$ is heavy if $|T(v)| > L$.
We explore $T(v)$ by performing a find center procedure on all the neighbours of $v$ and then continuing recursively on all the neighbours of $v$ that belong to $\Vor(v)$.
Since finding the center takes $O(L)$ calls to $\allnbr$, we conclude that we can determine whether $v$ is heavy or light by using $O(L^2\Delta)$ calls to $\allnbr$. This follows from the fact that when we partially or completely reveal $T(v)$, we need to find the center of at most $L\Delta$ vertices. Thus, the overall probe and time complexity for this step is $O(L^2\Delta^2)$.

\subparagraph*{Reconstructing the clusters.}
Given a vertex $v$ we reconstruct its cluster as follows. First, perform a find-center operation on $v$ and let $v:=u_0,u_1,\ldots,u_d$ be the path to the center. We then determine if $v$ is heavy using the prior procedure (and if so we are done). Otherwise, iteratively find $T(u_i)$ for $i\in [d]$ (where we do not search down the path that we have already explored), terminating the search when $T(u_i)>L$. In this case, construct the tree of special vertices below $u_i$ and again find the first ancestor of $u_{i-1}$ in this tree that is heavy, and let the cluster be the children of the predecessor special vertex. As we ultimately explore only $O(L\Delta)$ vertices, this results in  $O(L\Delta$) calls to find-center, which results in at most $O(L^2\Delta)$ calls to $\allnbr$.

\subparagraph*{Determining the cells adjacent to clusters.}
For Step~\ref{s2} we need to reconstruct the cluster of $u$, the cluster of $v$, and the clusters that $u$ and $v$ are engaged with; this takes $O(L^2\Delta)$ calls to $\allnbr$. In addition, for each of these clusters $C$, we need to determine the center of each vertex adjacent to $C$. Since the size of the clusters is bounded by $L$, the number of vertices adjacent to $C$ is at most $L \Delta$.
Therefore the number of calls to find-center is at most $L \Delta$.
This likewise requires $O(L^2\Delta)$ calls to $\allnbr$ and overall $O(L^2\Delta^2)$ probes and time.   

\medskip
We conclude that we can perform all necessary checks to decide whether $\{u,v\}\in E'$ or not using $O(L^2\Delta)$ calls to $\allnbr$ which invokes $O(L^2\Delta^2)$ neighbour probes. By the analysis above, the time complexity is $O(L^2\Delta^2)$ as well.
\end{proof}
\fi

\section{Graph Decomposition Via Ranking}\label{sec:decomp}
We give the formal statement of the LCA of Item~\ref{item4}. We note that our decomposition gives a stronger promise than the maximum degree of each subgraph being bounded, in that we actually bound the degree vertex-wise. In particular, up to poly-logarithmic factors, the average degree of every subgraph is equal to the overall average degree divided by the number of subgraphs with high probability.
\begin{theorem}\label{thm:decomp}
    There exists an LCA that, given a parameter $R \leq \sqrt{\Delta}$ and access to an $n$-vertex simple undirected graph $G=(V,E)$ with maximum degree $\Delta$,
    decomposes $G$ into edge-disjoint subgraphs $G_1,\ldots,G_{R^2}$ such that:
    \begin{enumerate}
        \item Given $(u,v) \in E$, the $i\in [R^2]$ such that $(u,v)\in G_i$ can be computed in time and space $O(1)$.
        \item Given $v\in V$ and $i\in [R^2]$, $\allnbr_i(v)$ can be computed in time and space $O(d_G(v)/R)$.
        \item With high probability, the degree of $v$ in $G_i$ is $O(\max\{\log(n),d_G(v)/R^2\})$ for every $v\in V$ and $i\in [R^2]$. In particular, for every $i$ the maximum degree of $G_i$ is $O(\max\{\log(n),\Delta/R^2\})$ with high probability.
    \end{enumerate}
\end{theorem}
We first describe the decomposition in a global manner. We refer to each subgraph as a color. We assign each edge in $G$ one of $R^2$ colors, such that the degree and $\allnbr$ query times are as claimed. We will identify the set of colors with $[R] \times [R]$, and assume $R^2 \le \Delta$ since otherwise the statement is trivial. For convenience, let $d(v):=d_G(v)$ and $d_i(v):=d_{H_i}(v)$.

Each vertex $v$ draws a random value $r_v \sim [R]$. Furthermore, for every vertex $v$, let the first $\lceil d(v)/R\rceil$ neighbors of $v$ be $B_1(v)$, the second be $B_2(v)$, etc. Note that this divides the out-edges into $R$ blocks. 
For an edge $(u,v)$ with $u<v$ in blocks $B_i(u),B_j(v)$ respectively, the color of the edge is the pair $(i+r_u \mod R, \;\; j+r_v \mod R).$
Let $G_{a,b}$ for $a,b\in [R]$ be the subgraph consisting of all edges with color $(a,b)$. 

We can then combine Theorem~\ref{thm:decomp} with Theorem~\ref{thm:k2-main} to give the final result.
\begin{corollary}\label{cor:k2-main}
    There exists an LCA that given access to an $n$-vertex simple undirected graph $G$ with maximum degree $\Delta$, constructs an $O(k^2)$-spanner with $\tO(n^{1+1/k})$ edges whose probe complexity and time complexity are $O(n^{2/3-(1.5-\alpha)/k}\Delta^2)$, for any constant $\alpha >0$.
\end{corollary}

\ifnum\confver=0
\subsection{Decomposition Implementation}
 \begin{claim}
     Given $(u,v)\in G$, we can determine the color $(a,b)$ of the edge in time and space $O(1)$.
 \end{claim}
 \begin{proof}
    By making two adjacency probes, we can determine the indices of edge $(u,v)$ in $u$ and $v$. Then we can compute which blocks contain this edge using two degree queries and a constant number of arithmetic operations, and then compute the final color by looking up the random shifts of $u$ and $v$.
 \end{proof}
	
\begin{claim}\label{clm:degreebound}
    With high probability, for every $v\in V$ and $(a,b)\in [R]\times[R]$ we have $d_{a,b}(v)=O(\max\{\log(n),d(v)/R^2\})$.
\end{claim}	
\begin{proof}=
    Fix an arbitrary vertex $v\in V$ and color $(a,b)\in [R]\times[R]$. Fix its shift of $r_v$ of $v$ arbitrarily. Let $S = B_{a - r_v}(v) \cup B_{b - r_v}(v)$ be the set of all edges incident to $v$ (in $G$) in blocks $a-r_v$ and $b-r_v$. Then $S$ is a superset of the set of edges incident to $v$ with color $(a,b)$, and $|S| = 2d(v)/R$.

    For an arbitrary edge $e = (v,u)$ in $S$, let $X_e$ be the indicator random variable which is 1 exactly when the color of $e$ is $(a, b)$.
    Let $k$ be the block index of $e$ in $u$ so that $e \in B_k(u)$.  There are four cases to consider regarding $e$:
    \begin{itemize}
    \item Case 1: $v<u$ and $e \in B_{a-r_v}(v)$.  Then $e$ has color $(a, b)$ if and only if $r_u$ is equal to $b - k$, which occurs with probability exactly $1/R$.
    \item Case 2: $v < u$ and $e \notin B_{a-r_v}(v)$.  In this case $e$ never has color $(a, b)$.
    \item Case 3: $v>u$ and $e \in B_{b-r_v}(v)$.  Similarly to case 1, $e$ has color $(a, b)$ if and only if $r_u$ is equal to $a - k$, which occurs with probability $1/R$.
    \item Case 4: $v>u$ and $e \notin B_{b-r_v(v)}$.  Similarly to case 2, $e$ never has color $(a, b)$.
    \end{itemize}
    In any case, we have $P[X_e = 1] \le 1/R$ for all $e \in S$.
    Furthermore, the variables in $\{X_e\}_{e \in S}$ are independent random variables: for distinct edges $e, e' \in S$ where $e = (u, v)$ and $e=(u', v)$, $X_e$ and $X_{e'}$ are independent since the random variables $r_u, r_{u'}$ are independent.
 
    Letting $X = \sum_{e \in S} X_e$ and picking an arbitrary constant $c \ge 2$, we find $$E[X] \le 2d(v)/R^2 \le c\max\{\log n,d(v)/R^2\} =: \mu.$$
    By the multiplicative Chernoff's bound we have
    $$\Pr[X > 3\mu] \le \exp(-\mu) \le \exp(-c\log n) = n^{-c},$$
    and so the total number of neighbors of $v$ with color $(a,b)$ is $O(\max\{\log n, d(v)/R^2\})$ with high probability.  Finally, a union bound over all $n$ vertices and $R^2 \le \Delta$ colors completes the proof.
\end{proof}

\begin{claim}\label{clm:allnbrtime}
    $\allnbr_i(v)$ can be computed in time and space $O(d(v)/R)$.
\end{claim}
\begin{proof}
    Given $v\in V$ and a color $i=(a,b)$, let $S = B_{a-r_v}(v) \cup B_{b-r_v}(v)$ as before. Note that these correspond to the blocks which have received labels $a$ and $b$ respectively, given the random shift of vertex $v$.  We make $2d(v)/R$ neighbor probes to determine all elements of \(S\), then $2d(v)/R$ adjacency probes to determine the indices of every edge in the other endpoint. Then for each edge, we can check in time $O(1)$ (by examining the random shift of the other endpoint) if the label is $(a,b)$.
\end{proof}

\begin{proof}[Proof of Corollary~\ref{cor:k2-main}]
    Let $\beta \in (0,1]$ be such that $3/(2+\beta)=1.5-\alpha$.
    Given a query if edge $(u,v)\in G$ is in the spanner, we apply the LCA of Theorem~\ref{thm:decomp} with parameter\footnote{We have $R^2 \in O(n^{1/k})$; in order to apply Theorem~\ref{thm:decomp} this should be at most $\Delta$.  We can assume this since if $\Delta$ is $O(n^{1/k})$ then the graph is already sparse to begin with.} \(R = \left\lceil n^{1/(k(2 + \beta))} \right\rceil\) to determine the $i$ such that $(u,v)\in G_{i}$. We then apply Theorem~\ref{thm:k2-main} with parameter \(k' = \left\lceil k(1 + \frac{2}{\alpha})\right\rceil\) to the graph $G_{i}$ and query if $(u,v)$ is contained in the spanner, and return the answer. Note that we ultimately obtain a \(O(k'^2) = O(k^2)\)-spanner for every subgraph (and thus for the overall graph), and the number of edges is bounded as
    \[\tO(R^2n^{1 + 1/k'}) \le \tO\left(n^{1+\frac{2}{k(2 + \alpha)} + \frac{1}{k(1 + 2/\alpha)}}\right) = \tO(n^{1 + 1/k}).\] 
    Furthermore, the time complexity  is
    \[O(n^{2/3}\Delta^2R^{-3}) \le O\left(n^{\frac23 - \frac{1.5-\alpha}{k}}\Delta^2\right). \qedhere\]
\end{proof}
\fi

\bibliographystyle{plain}
\bibliography{refs}

\begin{thebibliography}{10}

\bibitem{AGM12}
Kook~Jin Ahn, Sudipto Guha, and Andrew McGregor.
\newblock Graph sketches: sparsification, spanners, and subgraphs.
\newblock In {\em Proceedings of the 31st {ACM} {SIGMOD-SIGACT-SIGART}
  Symposium on Principles of Database Systems, {PODS} 2012, Scottsdale, AZ,
  USA, May 20-24, 2012}, pages 5--14. {ACM}, 2012.

\bibitem{ARVX12}
Noga Alon, Ronitt Rubinfeld, Shai Vardi, and Ning Xie.
\newblock Space-efficient local computation algorithms.
\newblock In {\em Proceedings of the Twenty-Third Annual {ACM-SIAM} Symposium
  on Discrete Algorithms, {SODA} 2012, Kyoto, Japan, January 17-19, 2012},
  pages 1132--1139. {SIAM}, 2012.

\bibitem{lec}
Sepehr Assadi and Aditi Dudeja.
\newblock Lecture 5 in advanced algorithms ii – sublinear algorithms.
\newblock URL:https://people.cs.rutgers.edu/~sa1497/courses/cs514-s20/lec5.pdf.

\bibitem{Aw85}
Baruch Awerbuch.
\newblock Complexity of network synchronization.
\newblock {\em Journal of the ACM (JACM)}, 32(4):804--823, 1985.

\bibitem{BD89}
Hans-J{\"u}rgen Bandelt and Andreas Dress.
\newblock Reconstructing the shape of a tree from observed dissimilarity data.
\newblock {\em Advances in applied mathematics}, 7(3):309--343, 1986.

\bibitem{BKS12}
Surender Baswana, Sumeet Khurana, and Soumojit Sarkar.
\newblock Fully dynamic randomized algorithms for graph spanners.
\newblock {\em {ACM} Trans. Algorithms}, 8(4):35:1--35:51, 2012.

\bibitem{BS07}
Surender Baswana and Sandeep Sen.
\newblock A simple and linear time randomized algorithm for computing sparse
  spanners in weighted graphs.
\newblock {\em Random Struct. Algorithms}, 30(4):532--563, 2007.

\bibitem{BRR22}
Soheil Behnezhad, Mohammad Roghani, and Aviad Rubinstein.
\newblock Sublinear time algorithms and complexity of approximate maximum
  matching.
\newblock {\em CoRR}, abs/2211.15843, 2022.

\bibitem{BFH19}
Aaron Bernstein, Sebastian Forster, and Monika Henzinger.
\newblock A deamortization approach for dynamic spanner and dynamic maximal
  matching.
\newblock In {\em Proceedings of the Thirtieth Annual {ACM-SIAM} Symposium on
  Discrete Algorithms, {SODA} 2019, San Diego, California, USA, January 6-9,
  2019}, pages 1899--1918. {SIAM}, 2019.

\bibitem{BK16}
Greg Bodwin and Sebastian Krinninger.
\newblock Fully dynamic spanners with worst-case update time.
\newblock In {\em 24th Annual European Symposium on Algorithms, {ESA} 2016,
  August 22-24, 2016, Aarhus, Denmark}, volume~57 of {\em LIPIcs}, pages
  17:1--17:18. Schloss Dagstuhl - Leibniz-Zentrum f{\"{u}}r Informatik, 2016.

\bibitem{Ch86}
Paul Chew.
\newblock There is a planar graph almost as good as the complete graph.
\newblock In Alok Aggarwal, editor, {\em Proceedings of the Second Annual {ACM}
  {SIGACT/SIGGRAPH} Symposium on Computational Geometry, Yorktown Heights, NY,
  USA, June 2-4, 1986}, pages 169--177. {ACM}, 1986.

\bibitem{DG08}
Bilel Derbel and Cyril Gavoille.
\newblock Fast deterministic distributed algorithms for sparse spanners.
\newblock {\em Theor. Comput. Sci.}, 399(1-2):83--100, 2008.

\bibitem{DGP07}
Bilel Derbel, Cyril Gavoille, and David Peleg.
\newblock Deterministic distributed construction of linear stretch spanners in
  polylogarithmic time.
\newblock In {\em Distributed Computing, 21st International Symposium, {DISC}
  2007, Lemesos, Cyprus, September 24-26, 2007, Proceedings}, volume 4731 of
  {\em Lecture Notes in Computer Science}, pages 179--192. Springer, 2007.

\bibitem{DGPV08}
Bilel Derbel, Cyril Gavoille, David Peleg, and Laurent Viennot.
\newblock On the locality of distributed sparse spanner construction.
\newblock In {\em Proceedings of the Twenty-Seventh Annual {ACM} Symposium on
  Principles of Distributed Computing, {PODC} 2008, Toronto, Canada, August
  18-21, 2008}, pages 273--282. {ACM}, 2008.

\bibitem{DGPV09}
Bilel Derbel, Cyril Gavoille, David Peleg, and Laurent Viennot.
\newblock Local computation of nearly additive spanners.
\newblock In {\em Distributed Computing, 23rd International Symposium, {DISC}
  2009, Elche, Spain, September 23-25, 2009. Proceedings}, volume 5805 of {\em
  Lecture Notes in Computer Science}, pages 176--190. Springer, 2009.

\bibitem{DFS90}
David~P Dobkin, Steven~J Friedman, and Kenneth~J Supowit.
\newblock Delaunay graphs are almost as good as complete graphs.
\newblock {\em Discrete \& Computational Geometry}, 5(4):399--407, 1990.

\bibitem{EN17}
Michael Elkin and Ofer Neiman.
\newblock Efficient algorithms for constructing very sparse spanners and
  emulators.
\newblock In {\em Proceedings of the Twenty-Eighth Annual {ACM-SIAM} Symposium
  on Discrete Algorithms, {SODA} 2017, Barcelona, Spain, Hotel Porta Fira,
  January 16-19}, pages 652--669. {SIAM}, 2017.

\bibitem{EMR18}
Guy Even, Moti Medina, and Dana Ron.
\newblock Best of two local models: Centralized local and distributed local
  algorithms.
\newblock {\em Inf. Comput.}, 262:69--89, 2018.

\bibitem{Gha22}
Mohsen Ghaffari.
\newblock Local computation of maximal independent set.
\newblock In {\em 63rd {IEEE} Annual Symposium on Foundations of Computer
  Science, {FOCS} 2022, Denver, CO, USA, October 31 - November 3, 2022}, pages
  438--449. {IEEE}, 2022.

\bibitem{Gol17}
Oded Goldreich.
\newblock {\em Introduction to Property Testing}.
\newblock Cambridge University Press, 2017.

\bibitem{GHLMS16}
Mika G{\"{o}}{\"{o}}s, Juho Hirvonen, Reut Levi, Moti Medina, and Jukka
  Suomela.
\newblock Non-local probes do not help with many graph problems.
\newblock In Cyril Gavoille and David Ilcinkas, editors, {\em Distributed
  Computing - 30th International Symposium, {DISC} 2016, Paris, France,
  September 27-29, 2016. Proceedings}, volume 9888 of {\em Lecture Notes in
  Computer Science}, pages 201--214. Springer, 2016.

\bibitem{KW14}
Michael Kapralov and David~P. Woodruff.
\newblock Spanners and sparsifiers in dynamic streams.
\newblock In {\em {ACM} Symposium on Principles of Distributed Computing,
  {PODC} '14, Paris, France, July 15-18, 2014}, pages 272--281. {ACM}, 2014.

\bibitem{LL18}
Christoph Lenzen and Reut Levi.
\newblock A centralized local algorithm for the sparse spanning graph problem.
\newblock In {\em 45th International Colloquium on Automata, Languages, and
  Programming, {ICALP} 2018, July 9-13, 2018, Prague, Czech Republic}, volume
  107 of {\em LIPIcs}, pages 87:1--87:14. Schloss Dagstuhl - Leibniz-Zentrum
  f{\"{u}}r Informatik, 2018.

\bibitem{LM17}
Reut Levi and Moti Medina.
\newblock A (centralized) local guide.
\newblock {\em Bull. {EATCS}}, 122, 2017.

\bibitem{LRR16}
Reut Levi, Dana Ron, and Ronitt Rubinfeld.
\newblock A local algorithm for constructing spanners in minor-free graphs.
\newblock In Klaus Jansen, Claire Mathieu, Jos{\'{e}} D.~P. Rolim, and Chris
  Umans, editors, {\em Approximation, Randomization, and Combinatorial
  Optimization. Algorithms and Techniques, {APPROX/RANDOM} 2016, September 7-9,
  2016, Paris, France}, volume~60 of {\em LIPIcs}, pages 38:1--38:15. Schloss
  Dagstuhl - Leibniz-Zentrum f{\"{u}}r Informatik, 2016.

\bibitem{LRR20}
Reut Levi, Dana Ron, and Ronitt Rubinfeld.
\newblock Local algorithms for sparse spanning graphs.
\newblock {\em Algorithmica}, 82(4):747--786, 2020.

\bibitem{LRY17}
Reut Levi, Ronitt Rubinfeld, and Anak Yodpinyanee.
\newblock Local computation algorithms for graphs of non-constant degrees.
\newblock {\em Algorithmica}, 77(4):971--994, 2017.

\bibitem{PRVY19}
Merav Parter, Ronitt Rubinfeld, Ali Vakilian, and Anak Yodpinyanee.
\newblock Local computation algorithms for spanners.
\newblock In {\em 10th Innovations in Theoretical Computer Science Conference,
  {ITCS} 2019, January 10-12, 2019, San Diego, California, {USA}}, volume 124
  of {\em LIPIcs}, pages 58:1--58:21. Schloss Dagstuhl - Leibniz-Zentrum
  f{\"{u}}r Informatik, 2019.

\bibitem{PU87}
David Peleg and Jeffrey~D Ullman.
\newblock An optimal synchronizer for the hypercube.
\newblock In {\em Proceedings of the sixth annual ACM Symposium on Principles
  of distributed computing}, pages 77--85, 1987.

\bibitem{PU89}
David Peleg and Eli Upfal.
\newblock A trade-off between space and efficiency for routing tables.
\newblock {\em Journal of the ACM (JACM)}, 36(3):510--530, 1989.

\bibitem{P10}
Seth Pettie.
\newblock Distributed algorithms for ultrasparse spanners and linear size
  skeletons.
\newblock {\em Distributed Comput.}, 22(3):147--166, 2010.

\bibitem{RTVX11}
Ronitt Rubinfeld, Gil Tamir, Shai Vardi, and Ning Xie.
\newblock Fast local computation algorithms.
\newblock In {\em Innovations in Computer Science - {ICS} 2011, Tsinghua
  University, Beijing, China, January 7-9, 2011. Proceedings}, pages 223--238.
  Tsinghua University Press, 2011.

\end{thebibliography}

\end{document}